\documentclass[a4paper]{article}
\usepackage[T1]{fontenc}
\usepackage{amsfonts,amsmath,amsthm,amssymb}
\usepackage{calc}
\usepackage{gensymb}
\usepackage[nottoc, notlof, notlot]{tocbibind}
\usepackage{fullpage}
\usepackage{mathrsfs}  
\usepackage{bigints}
\usepackage{indentfirst}
\usepackage{enumerate}
\usepackage{authblk}

\DeclareMathOperator{\tr}{Tr}

\let\liminf\relax
\DeclareMathOperator*{\liminf}{liminf}
\newcommand{\normt}[1]{\left\Vert #1\right\Vert_{(t)}}
\newcommand{\norm}[1]{\left\Vert #1\right\Vert}

\begin{document}

\newtheorem{theorem}{Theorem} [section]
\newtheorem{prop}[theorem]{Proposition} 
\newtheorem{defi}[theorem]{Definition} 
\newtheorem{exe}[theorem]{Exemple} 
\newtheorem{lemma}[theorem]{Lemma} 
\newtheorem{rem}[theorem]{Remark} 
\newtheorem{cor}[theorem]{Corollary} 
\newtheorem{conj}[theorem]{Conjecture}
\renewcommand\P{\mathbb{P}}
\newcommand\E{\mathbb{E}}
\newcommand\N{\mathbb{N}}
\newcommand\1{\mathbf{1}}
\newcommand\C{\mathbb{C}}
\newcommand\CC{\mathcal{C}}
\newcommand\M{\mathbb{M}}
\newcommand\R{\mathbb{R}}
\newcommand\U{\mathbb{U}}
\newcommand\A{\mathcal{A}}
\newcommand\B{\mathcal{B}}
\newcommand\D{\mathcal{D}}
\newcommand\qc{\Phi_{n}}
\renewcommand\i{\mathbf{i}}
\renewcommand\S{\mathcal{S}_{N,t}}
\renewcommand\L{\mathcal{L}_{\Phi}}
\renewcommand\d{\partial_i}
\renewcommand\.{\ .}
\renewcommand\,{\ ,}

\def\etc{,\dots ,}

\title{Concentration estimates for random subspaces of a tensor product, and application to Quantum Information Theory}

\date{}

\author[1]{Beno\^it Collins}
\author[1,2]{F\'elix Parraud}
\affil[1]{\small Department of Mathematics, Graduate School of Science, Kyoto University, Kyoto 606-8502, Japan.}
\affil[2]{\small Universit\'e de Lyon, ENSL, UMPA, 46 all\'ee d'Italie, 69007 Lyon.}

\maketitle

\begin{abstract}
Given a random subspace $H_n$ chosen uniformly in a tensor product of Hilbert spaces $V_n\otimes W$, we 
consider the collection $K_n$ of all singular values of all norm one elements of $H_n$ with respect to the tensor structure.
A law of large numbers has been obtained for this random set in the context of $W$ fixed and the dimension 
of $H_n,V_n$ tending to infinity at the same speed by Belinschi, Collins and Nechita.

In this paper, we provide measure concentration estimates in this context.
The probabilistic study of $K_n$ was motivated by important questions in Quantum Information Theory, and allowed to provide
the smallest known dimension for the dimension of an ancilla space allowing Minimum Output Entropy (MOE) violation. 
With our estimates, we are  able, as an application, to provide actual bounds for the dimension of spaces where violation of
MOE occurs.
\end{abstract} 

\vspace*{0,5cm}

Keywords: Random Matrix Theory, Random Set, Quantum Information Theory

Mathematics Subject Classification: 60B20, 46L54, 52A22, 94A17

\section{Introduction}

One of the most important questions in Quantum Information Theory (QIT) was to 
figure out whether one can find two quantum channels $\Phi_1$ and $\Phi_2$ such that
\begin{equation}
\label{4ineqqMOE}
H_{min}(\Phi_1\otimes\Phi_2)<H_{min}(\Phi_1)+H_{min}(\Phi_2),
\end{equation}
where $H_{min}$ is the Minimum Output Entropy (MOE), defined in section \ref{4sec:QIT}. This problem was solved by \cite{has}, in which it was proved that indeed two such quantum channels exist. \cite{has} builds on important preliminary work by \cite{hayden-winter} and references therein. 
This question is important for the following reason: the inequality 
 $H_{min}(\Phi_1\otimes\Phi_2)\le H_{min}(\Phi_1)+H_{min}(\Phi_2)$ is always true, and if the answer to 
 the question involving Equation \eqref{4ineqqMOE} was negative, this would mean 
 $H_{min}(\Phi_1\otimes\Phi_2)= H_{min}(\Phi_1)+H_{min}(\Phi_2)$ holds for any 
 two quantum channels $\Phi_1$ and $\Phi_2$, and it would have implied that the classical capacity of a quantum channel is equal to its Holevo capacity, and consequently it would give a systematic way to compute the classical capacity. 
 For more explanation we refer to \cite{CY2019}.
 Unfortunately, the existence of two quantum channels $\Phi_1$ and $\Phi_2$ satisfying Equation \eqref{4ineqqMOE}
 defeated this hope. On the other hand, it opens the quest for a better understanding of this superadditivity phenomenon.

Indeed, all proofs available so far are not constructive in the sense that constructions rely on the probabilistic method.
After the initial construction of \cite{has}, the probabilistic tools involved in the proof have been found to have deep relation with random matrix theory
in many respects, including large deviation principle \cite{bho}, Free probability \cite{BCN2}, convex geometry \cite{asw} and Operator Algebra \cite{C18}. The last two probably give the most conceptual proofs, and in particular convex geometry gives explicit numbers. Free probability gives the smallest numbers for the output dimension \cite{BCN2} but was unable to give estimates for the input dimension so far. More generally, 
a large violation is obtained with free probability in \cite{BCN2} (the value $H_{min}(\Phi_1)+H_{min}(\Phi_2)-H_{min}(\Phi_1\otimes\Phi_2)$ can get arbitrarily close to $\log 2$, which is much bigger than the constants obtained in \cite{has}), but it
relates to a law of large numbers obtained in \cite{BCN1} whose  speed of convergence was not 
explicit, and in turn, did not give any estimate on the smallest dimension of the input space.
In order to obtain explicit parameters, measure concentration estimates, ideally large deviation estimates, are required. 
From a theoretical point of view, this is the goal of this paper.
Our main results (Theorem \ref{4apr} and \ref{4thm:numerics}) give precise estimates for the probability of additivity violation and  the  dimension  of  the  violating  channel  in  a  natural  random channel model. The proof is based on the far reaching approach of \cite{deux} -- see as well \cite{un}.
As a corollary, we obtain the following important application in Quantum Information Theory:
\begin{theorem}[For the precise statement, see Theorem \ref{4thm:numerics}]
	There exist a quantum channel from $\M_{184\times 10^{52}}(\C)$ to $\M_{184}(\C)$ such that combined with its conjugate channel, it yields violation of the MOE, i.e. such that they satisfy the inequality \eqref{4ineqqMOE}.
\end{theorem}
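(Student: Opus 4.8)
The plan is to deduce this concrete statement from the quantitative machinery developed in the body of the paper, specialized to the random channel model built from a Haar-random subspace of a tensor product. Concretely, one takes $\Phi = \Phi_n$ to be the quantum channel associated to a Haar-random isometry (equivalently, a uniformly random subspace $H_n \subset V_n \otimes W$ with $\dim W = 184$), and $\bar\Phi$ its complex conjugate. The quantity governing MOE violation is the largest singular value encountered in $K_n$ for the input product state adapted to $\Phi \otimes \bar\Phi$; the Bell state input forces a large singular value, hence a small output entropy for the tensor channel, while the law of large numbers of Belinschi--Collins--Nechita pins down the typical largest singular value for each individual channel. The additivity gap is then bounded below by comparing $H_{min}(\Phi\otimes\bar\Phi) \le$ (entropy of the output of the Bell state) against $2 H_{min}(\Phi) \ge$ (twice the single-channel minimum, controlled from above by the LLN value). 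So the whole argument is: (i) recall this deterministic inequality scheme, (ii) insert the asymptotic values, (iii) replace "asymptotic" by "with high probability in finite dimension" using the concentration estimates of Theorem~\ref{4apr}.

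The key steps, in order, are as follows. First, I would recall the explicit form of the random channel and the expression of its output entropies in terms of the random set $K_n$ of singular values; in particular, single out the two relevant numbers: the typical top singular value $\lambda_{\max}(K_n)$ for a generic norm-one vector of $H_n$ (which controls $H_{min}(\Phi)$ from above) and the value produced by the Bell/maximally entangled input for $\Phi\otimes\bar\Phi$ (which controls $H_{min}(\Phi\otimes\bar\Phi)$ from above, and is strictly larger, yielding strictly smaller entropy). Second, I would feed the measure concentration estimate of Theorem~\ref{4apr} into this picture: it provides, for explicit dimensions, an explicit probability that $\lambda_{\max}(K_n)$ lies within a prescribed window around its limiting value, so that with positive probability both bounds hold simultaneously and the additivity inequality \eqref{4ineqqMOE} is strict. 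Third, I would carry out the numerical optimization over the free parameters (the ratio of $\dim H_n$ to $\dim(V_n\otimes W)$, and the aspect ratios), choosing them so that the output dimension is $184$ and the required input dimension is minimized subject to the concentration probability being nonzero; this is exactly the content of Theorem~\ref{4thm:numerics}, and the figure $184\times 10^{52}$ is the outcome of that optimization.

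The main obstacle is Step 2, i.e. obtaining concentration for the extreme singular value $\lambda_{\max}(K_n)$ rather than merely for linear statistics or for the bulk of the spectral distribution: the edge of the spectrum is notoriously the most delicate object, and a naive Lipschitz/Gaussian-concentration argument only controls the empirical measure against smooth test functions, not the top eigenvalue itself. This is precisely why the approach of \cite{deux}, \cite{un} is invoked — interpolation between the random model and its free/deterministic limit with quantitative control of the operator norm — and translating that control into an honest, fully explicit inequality with no hidden constants is where the real work lies. The secondary difficulty is bookkeeping: the chain of inequalities linking $H_{min}$ to the singular value and then to the dimension involves several nonlinear functions (logarithms, the entropy function), so propagating the concentration window through them without losing too much, and then optimizing, requires care but no new idea. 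Once Theorem~\ref{4apr} is in hand, the present statement follows by plugging in numbers, so I would present it as a corollary and defer the numerics to the proof of Theorem~\ref{4thm:numerics}.
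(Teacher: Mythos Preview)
Your overall plan matches the paper's: random channel plus its conjugate, the Bell-state input furnishing an upper bound on $H_{\min}(\Phi_n\otimes\bar\Phi_n)$ via the deterministic Hayden--Winter eigenvalue bound, Theorem~\ref{4apr} (through Theorem~\ref{4presque}) furnishing a high-probability lower bound on $H_{\min}(\Phi_n)=S_{n,k,t}$, and the specific dimensions then dropping out of the numerics in Theorem~\ref{4thm:numerics}.

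Two points of emphasis to straighten out, however. First, the direction of one bound is reversed in your sketch: you need a \emph{lower} bound on $H_{\min}(\Phi)$, not an upper bound; knowing that the top singular value is not too large is what keeps the single-channel output entropy from being too small. Second, what Theorem~\ref{4apr} actually delivers is the set inclusion $K_{n,k,t}\subset K_{k,t}+\varepsilon$, not merely concentration of a single scalar $\lambda_{\max}(K_n)$; it is this inclusion, combined with continuity of the entropy and the identification of the minimizer $x_t^*$ on $K_{k,t}$, that yields $S_{n,k,t}\ge S_{k,t}-3k\varepsilon|\ln\varepsilon|$ in Theorem~\ref{4presque}. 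Finally, note that the Bell-state side of the comparison is purely deterministic (the Hayden--Winter bound holds for every realization), so there is only one probabilistic event to control, not two simultaneous ones.
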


The paper is organized as follows. After this introduction, section 2 is devoted to 
introducing necessary notations and state the main theorem.
Section 3 contains the proof of the main theorem, and section 4 contains application to Quantum Information Theory.

Acknowledgements: B.C. was supported by JSPS KAKENHI 17K18734, 17H04823 and 20K20882.
F.P. was supported by a JASSO fellowship and Labex Milyon (ANR-10-LABX-0070) of Universit\'e de Lyon. This work was initiated while the second author was doing his MSc under the supervision of Alice Guionnet and he would like to thank
her for insightful comments and suggestions on this work. The authors would also like to thank Ion Nechita for an interesting remark on the minimum dimension with respect to $k$.

\section{Notations and main theorem}

We denote by $H$ a Hilbert space, which we assume to be finite dimensional. 
$B(H)$ is the set of bounded linear operators on $H$, and 
$D(H)\subset B(H)$ is the collection of trace 1, positive operators -- known as
\emph{density matrices}.
In the case of matrices, we denote it by  $\D_k\subset \M_k(\C)$.

Let $d,k,n\in\N$, let $U$ be distributed according to the Haar measure on the unitary group of $\M_{kn}(\C)$, let $P_n$ be the canonical injection from $\C^d$ to $\C^{kn}$, that is the matrix with $kn$ lines and $d$ columns with $1$ on the diagonal and $0$ elsewhere. With $\tr_n$ the unnormalized trace on $\M_n(\C)$, for $d\leq kn$, we define the following random linear map,
\begin{equation}
\label{4fond}
\qc : X\in\M_d(\C) \mapsto id_k \otimes \tr_n(U P_n X P_n^* U^*) \in\M_k(\C).
\end{equation}
This map is trace preserving, linear and completely positive (i.e. for any $l\in\N^*$, with $id_l:\M_l(\C)\to\M_l(\C)$ the identity map, $\phi_n\otimes id_l$ is a positive map, that is a map who maps positive elements to positive elements) and as such, it is known as a quantum channel. Let $t\in [0,1]$. Let $(d_n)_{n\in\N}$ be an integer sequence such that $d_n\sim  tkn$, and define 
\begin{equation}\label{4def-kkt}
K_{n,k,t} = \qc(\D_{d_n}).
\end{equation}

\noindent There is a much more geometric definition of $K_{n,k,t}$ thanks to the following proposition. Actually, while the quantum channel \eqref{4fond} is random, we do not use this fact in the proof, and we could very well prove the same result for a quantum channel defined as in equation \eqref{4fond} but with a deterministic unitary matrix instead of a Haar unitary matrix $U$.

\begin{prop}
	\label{4defeq}
	We have,
	\begin{equation}
	\label{4secondedef}
	K_{n,k,t} = \{ X\in \D_k \ |\  \forall A\in \D_k, \tr_k(XA) \leq \norm{P_n^* U^* A\otimes I_n U P_n} \} .
	\end{equation}
	
	\noindent Besides for any $A\in \D_k$, $ \{ X\in K_{n,k,t} \ |\ \tr_k(XA) = \norm{P_n^* U^* A\otimes I_n U P_n} \}$ is non-empty.
\end{prop}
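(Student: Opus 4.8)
The plan is to unwind the definition $K_{n,k,t} = \Phi_n(\D_{d_n})$ and describe $\Phi_n(\D_{d_n})$ as a subset of $\D_k$ via duality. First I would observe that $K_{n,k,t}$ is a subset of $\D_k$: since $\Phi_n$ is a trace-preserving, completely positive linear map, it sends density matrices to density matrices, so $\Phi_n(\D_{d_n}) \subseteq \D_k$. Moreover $\D_{d_n}$ is compact and convex and $\Phi_n$ is linear, so $K_{n,k,t}$ is itself compact and convex. A compact convex set is the intersection of the closed half-spaces containing it, and since everything lives in the hyperplane $\{\tr_k(X) = 1\}$ and $\D_k$ is self-dual as a cone, it is natural to test against $A \in \D_k$: one should have $X \in K_{n,k,t}$ if and only if $\tr_k(XA) \le h_{K_{n,k,t}}(A)$ for all $A \in \D_k$, where $h_{K_{n,k,t}}(A) = \sup_{Y \in K_{n,k,t}} \tr_k(YA)$ is the support function. (A small point to check here: testing only against $A \in \D_k$ rather than all self-adjoint $A$ suffices because $\D_k$ positively spans the self-adjoint matrices and a translate of $K_{n,k,t}$ lies in the positive cone up to the trace-one normalization; this needs a short argument, likely the real obstacle, see below.)

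The second step is to compute the support function explicitly. For $A \in \D_k$,
\[
h_{K_{n,k,t}}(A) = \sup_{\rho \in \D_{d_n}} \tr_k\big(\Phi_n(\rho) A\big) = \sup_{\rho \in \D_{d_n}} \tr_{kn}\big( U P_n \rho P_n^* U^* \, (A \otimes I_n)\big),
\]
using the definition \eqref{4fond} of $\Phi_n$ and the fact that $\tr_k \circ (id_k \otimes \tr_n) = \tr_{kn}$, together with cyclicity of the trace. Now $\sup_{\rho \in \D_{d_n}} \tr_{d_n}(\rho M) = \lambda_{\max}(M) = \norm{M}$ for a self-adjoint $M \in \M_{d_n}(\C)$, because the supremum of a linear functional over the density matrices is attained at a rank-one projector onto a top eigenvector. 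Applying this with $M = P_n^* U^* (A \otimes I_n) U P_n$ (which is self-adjoint and positive) gives $h_{K_{n,k,t}}(A) = \norm{P_n^* U^* (A \otimes I_n) U P_n}$, which is exactly the quantity in \eqref{4secondedef}. Combining with the first step yields the claimed description of $K_{n,k,t}$.

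For the final assertion, fix $A \in \D_k$. The supremum defining $h_{K_{n,k,t}}(A)$ is attained: take $\rho_0 = |\xi\rangle\langle\xi|$ where $\xi \in \C^{d_n}$ is a unit top eigenvector of $M = P_n^* U^* (A\otimes I_n) U P_n$, so that $\tr_k(\Phi_n(\rho_0) A) = \lambda_{\max}(M) = \norm{M}$. Then $X := \Phi_n(\rho_0) \in K_{n,k,t}$ satisfies $\tr_k(XA) = \norm{P_n^* U^* (A\otimes I_n) U P_n}$, so the set in question is non-empty. The main obstacle I anticipate is the care needed in the first step: making precise why it is enough to test the support-function inequality against $A \in \D_k$ only, and checking that the resulting half-space description does not accidentally cut off points of $\D_k$ that are not in $K_{n,k,t}$ — in other words, verifying that the intersection on the right-hand side of \eqref{4secondedef} is exactly $K_{n,k,t}$ and not something strictly larger. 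This is where one invokes the Hahn--Banach separation theorem together with the self-duality of the positive semidefinite cone and the common trace normalization; the rest is essentially the standard identity "support function of a quantum channel image = operator norm of the dual channel applied to the test operator."
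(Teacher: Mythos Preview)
Your plan is correct, and the overall strategy---compute the support function of $K_{n,k,t}$ explicitly and recover the set by convex duality---is the same as the paper's. The forward inclusion and the attainment claim are handled identically: the paper also reduces to rank-one inputs $P_x$ to see that $\max_{Y\in K_{n,k,t}}\tr_k(AY)=\norm{P_n^*U^*(A\otimes I_n)UP_n}$.

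Where you and the paper diverge is in the reverse inclusion $E\subset K_{n,k,t}$. You propose a direct Hahn--Banach separation: if $X\in E\setminus K_{n,k,t}$, separate by some self-adjoint $B$, then replace $B$ by $(B+\lambda I_k)/\tr_k(B+\lambda I_k)\in\D_k$ for large $\lambda$; since every element of $E$ and of $K_{n,k,t}$ has trace one, the separating inequality survives this affine renormalisation and contradicts $X\in E$. This is exactly the ``common trace normalisation'' point you flag, and it works. The paper instead argues via exposed points: it shows that every exposed point of $E$ lies in $K_{n,k,t}$ (using the same shift-to-$\D_k$ trick you describe to upgrade the exposing functional to a density matrix), then invokes Straszewicz's theorem that exposed points are dense in extreme points, and concludes by convexity of $K_{n,k,t}$. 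Your route is more elementary in that it avoids Straszewicz; the paper's route makes the geometric picture (exposed faces of $E$ are already faces of $K_{n,k,t}$) more explicit. Both hinge on the same technical observation you correctly isolated as the only nontrivial step.
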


\begin{proof}
	Let $Y\in \mathcal{D}_{d_n}$, $A\in \D_k$, then
	\begin{align*}
	\tr_k(\qc(Y)A) &= \tr_{kn}(U P_n Y P_n^* U^* \cdot A\otimes I_n) \\
	&= \tr_d( \sqrt{Y} P_n^* U^* \cdot A\otimes I_n \cdot U P_n \sqrt{Y}) \\
	&\leq \tr_d(Y) \norm{P_n^* U^* \cdot A\otimes I_n\cdot U P_n} \\
	&= \norm{P_n^* U^* \cdot A\otimes I_n \cdot U P_n} .
	\end{align*}
	
	\noindent Let us write $E$ for the right-hand side of the equation \eqref{4secondedef}, we just showed that $ K_{n,k,t} = \qc(\mathcal{D}_{d_n}) \subset E $. Besides if $P_x$ is the orthogonal projection on the vector $x$, we have that 
	$$ \norm{P_n^* U^* \cdot A\otimes I_n\cdot  U P_n} = \max_{x\in \C^d} \tr_{d}(P_n^* U^* \cdot A\otimes I_n\cdot U P_n P_x) = \max_{x\in \C^d} \tr_{k}(A\ \qc(P_x) ). $$
	
	\noindent Thus, for every $\varepsilon >0$ and $A\in \D_k$, we can find an element of $K_{n,k,t}$ in $\{ X\in \D_k \ |\ \tr_k(XA) \geq \norm{P_n^* U^* A\otimes I_n U P_n} - \varepsilon \}$. By compactness of $K_{n,k,t}$, we can even find an element of $K_{n,k,t}$ in $\{ X\in \D_k\  |\ \tr_k(XA) = \norm{P_n^* U^* A\otimes I_n U P_n} \}$.
	
	If we see $E$ as a convex set of $\M_k(\C)_{sa}$ the set of self-adjoint matrices of size $k$, let $X\in E$ be an exposed point of $E$, that is there exists $A\in\M_k(\C)_{sa}$ and $C$ such that the intersection of $E$ and $\{ Y\in \M_k(\C)_{sa} \ |\ \tr_k(AY)=C \}$ is reduced to $\{X\}$ and that $E$ is included in $\{ Y\in \M_k(\C)_{sa} \ |\ \tr_k(AY) \leq C \}$. We have the following equality for $\lambda$ large enough since if $Y\in \D_k$, $\tr_k(Y)=1$,
	
	$$ \{ Y\in \D_k \ |\ \tr_k(AY)=C \} = \left\{ Y\in \D_k \ |\ \tr_k\left(\frac{A+\lambda I_k}{\tr_k(A+\lambda I_k)}Y \right) = \frac{C+\lambda}{\tr_k(A+\lambda I_k)} \right\} . $$
	
	\noindent Thus, we can find $B\in \D_k$ and $c$ such that such that the intersection of $E$ and $\{ Y\in \D_k \ |\ \tr_k(BY) = c \}$ is reduced to $\{X\}$ and that $E$ is included in $\{ Y\in \D_k \ |\ \tr_k(BY) \leq c \}$. To summarize:
	\begin{itemize}
		\item The intersection of $K_{n,k,t}$ and $\{ Y\in \D_k \ |\ \tr_k(BY) = \norm{P_n^* U^* \cdot B\otimes I_n \cdot U P_n} \}$ 
		is non-empty.
		\item $K_{n,k,t} \subset E$, so the intersection of $E$ and 
		$\{ Y\in \D_k \ |\ \tr_k(BY) = \norm{P_n^* U^* \cdot B\otimes I_n\cdot U P_n} \}$ is non-empty.
		\item The intersection of $E$ and $\{ Y\in \D_k \ |\ \tr_k(BY) = c \}$ is exactly $\{X\}$.
		\item $E$ is included in both $\{ Y\in \D_k \ |\ \tr_k(BY) \leq c \}$ and 
		$$\{ Y\in \D_k \ |\ \tr_k(BY) \leq \norm{P_n^* U^* \cdot B\otimes I_n\cdot  U P_n} \} .$$
		
	\end{itemize} 
	Hence it implies that $c = \norm{P_n^* U^* B\otimes I_n U P_n}$ and that $X\in K_{n,k,t}$. Thus we showed that the exposed point of $E$ belongs to $K_{n,k,t}$. By a result of Straszewicz (\cite{dens},theorem $18.6$) the set of exposed points is dense in the set of extremal points, so the set of extremal points of $E$ is included in $K_{n,k,t}$. Since $K_{n,k,t}$ is convex, $E$ is included in $K_{n,k,t}$.
\end{proof}

Thanks to Theorem 1.4 of \cite{collins_male}, we know that for any $A\in\M_k(\C)$, $\norm{P_{n}^* U^* \cdot A\otimes I_{n}\cdot  U P_{n}}$ converges almost 
surely towards a limit $\normt{A}$, which we now describe in terms of free probability (for the interested reader we refer 
to \cite{voiculescu-dykema-nica}, but a non expert reader can take $\lim_n\norm{P_{n}^* U^* \cdot A\otimes I_{n} \cdot U P_{n}}$  
as the definition of $\normt{A}$ without loss of generality).
For $A\in\M_k(\C)$, we set:
\begin{equation}
\normt{A}:=\norm{p_t A p_t},
\end{equation}

\noindent where on the right side we took the operator norm of $p_t A p_t$, with $p_t$ a self-adjoint projection of trace $t$, free from $A$. Consequently, we define
\begin{equation}
\label{4limit}
K_{k,t} = \{ X\in \D_k \ |\  \forall A\in \D_k, \tr_k(XA) \leq \normt{A} \} .
\end{equation}
Given their definition, it seems natural to say that $K_{n,k,t}$ converges towards $K_{k,t}$. However it is not quite as straightforward. The convergence for the Hausdorff distance was proved in \cite{BCN1}, Theorem 5.2. 
More precisely the authors proved that given a random subspace of size $d_n$, $F_{n,k,t}$ the collection of singular values of unit vectors in 
this subspace converges for the Hausdorff distance towards a deterministic set $F_{k,t}$. It turns out that $K_{n,k,t}$ 
(respectively $K_{k,t}$) is the convex hull of the self-adjoint matrices whose eigenvalues are in $F_{n,k,t}$ (respectively $F_{k,t}$). 
However we do not use this theorem and our paper is independent from \cite{BCN1}. The main result of this paper is a measure concentration estimate and can be stated as follows. 

\begin{theorem}
	\label{4apr}
	If we assume that $t\in[0,1]$, $k, n \in\N$, 	$(d_n)_{n\in\N}$ is an integer sequence such that $d_n\sim  tkn$ and $d_n\leq tkn$, then for any $\varepsilon>0$ and $\ n \geq 3^4 \times 2^{31}\times \ln^2(kn) \times k^3 \varepsilon^{-4}$,
	$$ \P\left(K_{n,k,t} \not\subset K_{k,t} + \varepsilon\right) \leq e^{k^2(\ln(3k^2 \varepsilon^{-1})) -\frac{n}{k}\times \frac{\varepsilon^2}{576} }, $$
	where $K_{k,t} + \varepsilon = \{ Y\in\D_k\ |\ \exists X\in K_{k,t}, \norm{X-Y}_2\leq \varepsilon \}$ with $\norm{M}_2 := \sqrt{\tr_{k}(M^*M)}$.
\end{theorem}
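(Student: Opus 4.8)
\emph{Proof strategy.} The plan is to reformulate the inclusion $K_{n,k,t}\subset K_{k,t}+\varepsilon$ as a uniform bound on a difference of support functions, to discretise the set of test directions, and then, for each fixed direction, to separate a measure-concentration contribution from a purely deterministic bias; only the latter is genuinely hard. \textbf{Step 1 (reduction to a spectral estimate).} Write $h_C(A):=\sup_{X\in C}\tr_k(XA)$. By Proposition~\ref{4defeq}, $h_{K_{n,k,t}}(A)=\norm{P_n^*U^*A\otimes I_nUP_n}$ for $A\in\D_k$, and the same argument gives $h_{K_{k,t}}(A)=\normt{A}$; by positive homogeneity both identities extend to all $A\ge 0$. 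Since $K_{k,t}+\varepsilon$ is convex with support function $A\mapsto\normt{A}+\varepsilon\norm{A}_2$, the inclusion holds if and only if $h_{K_{n,k,t}}(A)\le h_{K_{k,t}}(A)+\varepsilon\norm{A}_2$ for all self-adjoint $A$. The map $A\mapsto h_{K_{n,k,t}}(A)-h_{K_{k,t}}(A)$ is invariant under $A\mapsto A+\lambda I_k$, positively homogeneous, and on positive $A$ equals $g(A):=\norm{P_n^*U^*A\otimes I_nUP_n}-\normt{A}$. So, replacing a direction $A$ with $\norm{A}_2=1$ by $A-\lambda_{\min}(A)I_k\ge 0$ (of operator norm at most $2$ and trace at most $2k$) and rescaling it to trace $1$, one reduces the claim to
$$\P\Big(\sup_{A\in\D_k}g(A)>\frac{\varepsilon}{2k}\Big)\ \le\ e^{k^2\ln(3k^2\varepsilon^{-1})-\frac{n}{k}\times\frac{\varepsilon^2}{576}}.$$

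\textbf{Step 2 (net, union bound, splitting).} The function $g$ is a difference of support functions of subsets of $\D_k$, hence $2$-Lipschitz for $\norm{\cdot}_2$. Take a $\delta$-net $\mathcal N$ of $\D_k$ with $\delta$ of order $\varepsilon k^{-2}$; since $\D_k$ lies in a real space of dimension $k^2$ and has bounded $\norm{\cdot}_2$-diameter, one may arrange $|\mathcal N|\le(3k^2\varepsilon^{-1})^{k^2}=e^{k^2\ln(3k^2\varepsilon^{-1})}$. Then $\P(\sup_{A\in\D_k}g(A)>\varepsilon/(2k))\le|\mathcal N|\max_{A\in\mathcal N}\P(g(A)>\varepsilon/(4k))$, and for a fixed $A\in\D_k$ I split $g(A)$ into a fluctuation term and a bias term,
$$g(A)=\big(\norm{P_n^*U^*A\otimes I_nUP_n}-\E\norm{P_n^*U^*A\otimes I_nUP_n}\big)+\big(\E\norm{P_n^*U^*A\otimes I_nUP_n}-\normt{A}\big),$$
and bound $\P(g(A)>\varepsilon/(4k))$ by $\P(\text{fluct}>\varepsilon/(8k))+\P(\text{bias}>\varepsilon/(8k))$.

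\textbf{Step 3 (fluctuations via concentration on the unitary group).} Since $\norm{P_n^*(U^*BU-V^*BV)P_n}\le 2\norm{B}\norm{U-V}$ and $\norm{A}\le\tr_k(A)=1$, the function $f(U):=\norm{P_n^*U^*A\otimes I_nUP_n}$ is $2$-Lipschitz for the Hilbert--Schmidt metric on $U(kn)$; it is also invariant under $U\mapsto e^{i\theta}U$, so it factors through $U(kn)/U(1)$, where concentration of measure (equivalently, on $SU(kn)$) gives $\P(f(U)\ge\E f(U)+s)\le e^{-ckn s^2}$ for a universal $c$. With $s$ of order $\varepsilon/k$ this is $e^{-n\varepsilon^2/(576k)}$, the explicit $576$ coming from the Lipschitz constant, the unitary-group concentration constant, and the chosen splitting; multiplied by $|\mathcal N|$ it reproduces the bound of Step 1 --- provided the bias is at most $\varepsilon/(8k)$.

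\textbf{Step 4 (the bias: quantitative strong convergence --- the main obstacle).} It remains to prove $\E\norm{P_n^*U^*A\otimes I_nUP_n}\le\normt{A}+\varepsilon/(8k)$ uniformly over $A\in\D_k$ whenever $n\ge 3^4\times 2^{31}\times\ln^2(kn)\times k^3\varepsilon^{-4}$; this is the only non-soft ingredient. I would first linearise: the operator norm is not a polynomial, so pass to a self-adjoint linear pencil $X_0$ in $U,U^*,P_n,P_n^*,A\otimes I_n$ and express the norm via $\frac{1}{D}\E\big[\tr(h(X_0))\big]$ for smooth $h$ approximating the indicator of a half-line ($D$ the size of the pencil). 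Then, following the interpolation and asymptotic-expansion method of \cite{deux} (see also \cite{un}, and \cite{collins_male} for the underlying strong convergence), compare $\frac{1}{D}\E\big[\tr(h(X_0))\big]$ with its free analogue, obtained on replacing $U$ by a free Haar unitary; this costs an error of order $n^{-2}$ times a Sobolev-type norm of $h$ (with no first-order term, as for $\beta=2$ models), and one must track the dependence on $k$ throughout, since the deterministic data and the linearisation blocks all scale with $k$. Finally, passing back from the smoothed functional to the genuine operator norm --- the usual source of loss in strong-convergence estimates --- degrades the rate to order $n^{-1/4}$ and introduces a $\ln(kn)$ factor from truncating the smoothing parameter; optimising that parameter yields the explicit constants $3^4\times 2^{31}$ and the stated threshold on $n$, under which the bias is at most $\varepsilon/(8k)$. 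The delicate points are exactly the uniformity in $A$, the power-of-$k$ bookkeeping through the linearisation, and this last lossy passage from smooth functionals back to $\norm{\cdot}$; Steps 1--3 are routine by comparison.
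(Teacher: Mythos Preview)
Your strategy is correct and is essentially the paper's proof: the reduction of Step~1 is Proposition~\ref{4haussd} (which obtains $\varepsilon/k$ rather than your $\varepsilon/(2k)$), Step~2 is Lemma~\ref{4ptilemme}, Step~3 is the concentration input of Proposition~\ref{4hyp}, and Step~4 is Lemmas~\ref{4firstestimate}--\ref{4sndestimate}.

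Two minor differences are worth flagging. First, no linearisation is needed in Step~4: $P_n^*U^*(A\otimes I_n)UP_n$ is already a degree-two self-adjoint word in a single Haar unitary, and the paper applies the expansion of \cite{deux} directly to $f(PU^*AUP)$ for a smooth cut-off $f$ (Lemma~\ref{4fonctionsmooth}), which keeps the $k$-bookkeeping elementary; passing to a linear pencil would only add overhead. Second, the paper does not bound the bias $\E\norm{\cdot}-\normt{A}$ in one shot. It first proves the polynomial tail estimate $\P\big(\norm{P_n^*U^*A\otimes I_nUP_n}\ge\normt{A}+\varepsilon\big)\le C\,\ln^2(kn)\,(kn)^{-1}\varepsilon^{-4}$ (Lemma~\ref{4firstestimate}), then integrates that tail to control $\E\big[(\norm{\cdot}-\normt{A})_+\big]$ and feeds the result into the concentration inequality inside Lemma~\ref{4sndestimate}; optimising the truncation level is exactly where the exponent $1/4$ and the threshold $n\ge 3^4\cdot 2^{31}\ln^2(kn)k^3\varepsilon^{-4}$ appear. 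Your Step~4 would have to do the same, since a smoothed-trace estimate alone does not yield a bound on the expected operator norm without such a tail-integration step.
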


While this does not prove the convergence for the Hausdorff distance of $K_{n,k,t}$ towards $K_{k,t}$ since we do not study the probability that $K_{k,t} \not\subset K_{n,k,t} + \varepsilon$. We could adapt our proof to get this result, but without getting estimates with explicit constants which would be detrimental to our aim of finding explicit parameters for violation of the additivity of the MOE.

\section{Proof of main theorem}

We will combine this geometrical description with the following lemma to get an estimate.

\begin{prop}
	\label{4haussd}
	If we define $K_{k,t} + \varepsilon = \{ Y\in\D_k\ |\ \exists X\in K_{k,t}, \norm{X-Y}_2\leq \varepsilon \}$ with $\norm{M}_2 := \sqrt{\tr_{k}(M^*M)}$, then the following implication is true,
	$$ \forall A\in \D_k,\ \norm{P_n^* U^* A\otimes I_n U P_n} \leq \normt{A} + \frac{\varepsilon}{k}\quad \implies\quad K_{n,k,t} \subset K_{k,t} + \varepsilon .$$
\end{prop}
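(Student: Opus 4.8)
The plan is to unwind the geometric description of $K_{n,k,t}$ from Proposition~\ref{4defeq} and compare it term-by-term with the limiting set $K_{k,t}$ from \eqref{4limit}. Concretely, take $Y\in K_{n,k,t}$; by Proposition~\ref{4defeq} we know $Y\in\D_k$ and $\tr_k(YA)\leq\norm{P_n^* U^* A\otimes I_n U P_n}$ for every $A\in\D_k$. Under the hypothesis of the proposition, the right-hand side is at most $\normt{A}+\varepsilon/k$, so $Y$ satisfies the "relaxed" system of inequalities $\tr_k(YA)\leq\normt{A}+\varepsilon/k$ for all $A\in\D_k$. The heart of the argument is thus to show that any $Y\in\D_k$ satisfying this relaxed system is within $\norm{\cdot}_2$-distance $\varepsilon$ of the exact solution set $K_{k,t}$.

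For that I would use a separation/duality argument. Suppose $Y\in\D_k$ is at $\norm{\cdot}_2$-distance $>\varepsilon$ from the convex compact set $K_{k,t}$. Then by the Hahn--Banach separation theorem (in the Euclidean space $\M_k(\C)_{sa}$ equipped with the inner product $\langle M,N\rangle=\tr_k(MN)$) there is a self-adjoint $H$ with, say, $\norm{H}_2=1$ separating $Y$ from $K_{k,t}$ with a gap controlled by $\varepsilon$: $\tr_k(YH)\geq \sup_{X\in K_{k,t}}\tr_k(XH)+\varepsilon$. The next step is to identify $\sup_{X\in K_{k,t}}\tr_k(XH)$. Writing $H=A-\lambda I_k$ with $A\in\D_k$ and $\lambda$ a scalar chosen so that $A$ is a density matrix — exactly the affine normalization trick already used in the proof of Proposition~\ref{4defeq} — and using that $\tr_k(Y)=\tr_k(X)=1$, one reduces to separating with elements $A\in\D_k$. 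One then needs that $\sup_{X\in K_{k,t}}\tr_k(XA)=\normt{A}$; the inequality $\leq$ is immediate from the definition \eqref{4limit} of $K_{k,t}$, and the reverse inequality (that the supremum is actually attained at the value $\normt{A}$) should follow from the same exposed-point/Straszewicz argument as in Proposition~\ref{4defeq}, applied to the limiting convex body instead of the random one, or can simply be quoted from there by analogy. Once we know $\sup_{X\in K_{k,t}}\tr_k(XA)=\normt{A}$, the separation inequality reads $\tr_k(YA)\geq \normt{A}+\varepsilon$ for some $A\in\D_k$, which directly contradicts the relaxed system $\tr_k(YA)\leq\normt{A}+\varepsilon/k$ (since $\varepsilon/k\leq\varepsilon$). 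Hence no such $Y$ exists, i.e. $K_{n,k,t}\subset K_{k,t}+\varepsilon$.

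The main subtlety — and where the factor $1/k$ comes from — is the conversion between a distance bound in $\norm{\cdot}_2$ and a separating-functional gap, together with the passage from a general separating self-adjoint $H$ to a separating \emph{density matrix} $A$. The affine renormalization $A=(H+\mu I_k)/\tr_k(H+\mu I_k)$ rescales the functional by $\tr_k(H+\mu I_k)$, and to keep $A\in\D_k$ one is forced to take $\mu$ of order $\norm{H}$, so $\tr_k(H+\mu I_k)$ is of order $k$; this divides the gap by something like $k$, which is precisely why the hypothesis is stated with slack $\varepsilon/k$ rather than $\varepsilon$. I would therefore be careful to track the operator-norm vs. Hilbert--Schmidt-norm vs. trace normalizations of $H$ so that the constant works out to exactly $1/k$ and not worse. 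Modulo this bookkeeping, the proof is a short convex-geometry argument: separation, affine normalization to land in $\D_k$, and identification of the support function of $K_{k,t}$ with $\normt{\cdot}$.
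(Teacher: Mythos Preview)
Your plan is essentially the paper's: argue by contraposition, produce a separating hyperplane for a point $Y\in K_{n,k,t}$ at distance $>\varepsilon$ from $K_{k,t}$, renormalize the separating functional to a density matrix $A$, and compare $\tr_k(YA)$ with $\normt{A}$. Two remarks sharpen what you left vague.

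First, the paper nails the constant $1/k$ by taking $X$ to be the nearest point of $K_{k,t}$ to $Y$ and setting $V=(Y-X)/\norm{Y-X}_2$, $A=\frac{1}{k}(V+I_k)$. Since $X,Y\in\D_k$ one has $\tr_k V=0$, and $\norm{V}\le\norm{V}_2=1$ gives $V+I_k\ge 0$, so $A\in\D_k$ with normalizing constant exactly $k$. Then $\tr_k(AY)-\tr_k(AX)=\frac{1}{k}\tr_k(V(Y-X))=\frac{1}{k}\norm{Y-X}_2>\varepsilon/k$. This is precisely the bookkeeping you flagged; the nearest-point projection automatically hands you a trace-zero, unit-$\norm{\cdot}_2$ separating functional, so no further optimization over $\mu$ is needed.

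Second, the identity $\sup_{X\in K_{k,t}}\tr_k(XA)=\normt{A}$ is indeed the crucial nontrivial step, but it does \emph{not} follow from a Straszewicz-type argument: for the limiting body $K_{k,t}$ there is no a priori image description like $\Phi_n(\D_{d_n})$ to play off against the inequality description. The paper isolates this as a separate preliminary lemma (Lemma~\ref{4nonempty}) and proves it by a limit argument: for each $n$ take $X_n\in K_{n,k,t}$ with $\tr_k(X_nA)=\norm{P_n^*U^*A\otimes I_nUP_n}$ (this exists by Proposition~\ref{4defeq}), extract a subsequential limit $X\in\D_k$, and use the strong convergence $\norm{P_n^*U^*B\otimes I_nUP_n}\to\normt{B}$ for every $B\in\D_k$ to conclude both $X\in K_{k,t}$ and $\tr_k(XA)=\normt{A}$. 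So your outline is correct, but the justification you offered for this step would have to be replaced by this compactness-plus-strong-convergence argument.
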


Before proving it, we need a small lemma on the structure of $K_{k,t}$.

\begin{lemma}
	\label{4nonempty}
	Let $A\in\D_k$, then $\{ X\in K_{k,t} \ |\ \tr_k(XA) = \normt{A} \}$ is non-empty.
\end{lemma}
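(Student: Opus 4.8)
The goal is to show that for a fixed $A \in \D_k$, the linear functional $X \mapsto \tr_k(XA)$ attains its supremum $\normt{A}$ over $K_{k,t}$. The plan is to mirror the compactness argument used in the proof of Proposition \ref{4defeq}. First I would observe that $K_{k,t}$ is a closed subset of $\D_k$ (it is an intersection of closed half-spaces with the compact set $\D_k$), hence compact; therefore the continuous functional $X \mapsto \tr_k(XA)$ attains its maximum on $K_{k,t}$, and by definition \eqref{4limit} this maximum is at most $\normt{A}$. So it remains only to produce, for every $\varepsilon > 0$, some $X_\varepsilon \in K_{k,t}$ with $\tr_k(X_\varepsilon A) \geq \normt{A} - \varepsilon$; compactness then upgrades this to an exact maximizer.

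To produce such near-maximizers, I would go back to the free-probability description $\normt{A} = \norm{p_t A p_t}$ with $p_t$ a trace-$t$ projection free from $A$. The operator $p_t A p_t$, restricted to the range of $p_t$, is self-adjoint, so its norm is approached by states: there is a state $\varphi$ on the relevant algebra with $\varphi(p_t A p_t)$ close to $\norm{p_t A p_t}$; equivalently, since $A \in \D_k$ and hence $p_t A p_t \ge 0$, the quantity $\normt{A}$ is the supremum of $\tau(p_t A p_t \, q)$ over projections $q \le p_t$ of small trace (normalized appropriately), or more directly the supremum of $\tau(A \, p_t q p_t)/\tau(p_t q p_t)$. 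The candidate element of $\D_k$ is then obtained by compressing: one takes the "conditional expectation'' / partial trace onto $\M_k(\C)$ of a suitably normalized rank-one-type element $p_t q p_t$, exactly as the map $P_x \mapsto \qc(P_x)$ was used in Proposition \ref{4defeq}. Concretely, for a unit vector $\xi$ in the range of $p_t$ I would set $X_\xi$ to be the element of $\D_k$ determined by $\tr_k(X_\xi B) = \tau(B \, p_t \lvert\xi\rangle\langle\xi\rvert p_t)/\tau(p_t\lvert\xi\rangle\langle\xi\rvert p_t)$ for all $B \in \M_k(\C)$ (positivity and trace one are immediate, so $X_\xi \in \D_k$), check that $X_\xi \in K_{k,t}$ because $\tr_k(X_\xi B) = \langle \xi, B\xi\rangle/\norm{p_t\xi}^2 \le \norm{p_t B p_t}/\norm{p_t\xi}^2 \le \normt{B}$ — here one must be slightly careful and instead bound using the same projection-compression trick so that the normalization does not spoil the inequality — and then choose $\xi$ so that $\langle \xi, A\xi\rangle$ is within $\varepsilon$ of $\norm{p_t A p_t}$.

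The main obstacle I anticipate is the bookkeeping around normalization: a priori $\tr_k(X_\xi B)$ involves dividing by $\norm{p_t\xi}^2 \le 1$, which points the inequality the wrong way, so one cannot naively conclude $X_\xi \in K_{k,t}$. The clean fix is to avoid vectors altogether and argue with projections as in Proposition \ref{4defeq}: replace $\lvert\xi\rangle\langle\xi\rvert$ by a projection $q$ and use that $\tr_k(X) \le \normt{\cdot}$ is exactly the inequality $\tr_{kn}(\cdots) \le \tau(\cdots)\norm{\cdots}$ appearing there, now read in the free (limiting) setting. Alternatively — and this is probably the slickest route — one can bypass the construction entirely: since $K_{n,k,t} \to K_{k,t}$ in Hausdorff distance (\cite{BCN1}, Theorem 5.2) and the non-emptiness statement holds for each $K_{n,k,t}$ by Proposition \ref{4defeq}, a limiting argument along a convergent subsequence of maximizers $X_n \in K_{n,k,t}$ with $\tr_k(X_n A) = \norm{P_n^* U^* A \otimes I_n U P_n} \to \normt{A}$ yields the desired $X \in K_{k,t}$; but since the authors emphasize independence from \cite{BCN1}, I would present the direct free-probability argument and relegate this to a remark.
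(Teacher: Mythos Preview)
Your ``slickest route'' at the end is precisely what the paper does, and it does \emph{not} rely on the Hausdorff convergence from \cite{BCN1}. The paper simply takes, for each $n$, a point $X_n\in K_{n,k,t}$ with $\tr_k(X_nA)=\norm{P_n^*U^*A\otimes I_n UP_n}$ (guaranteed by Proposition~\ref{4defeq}), extracts a subsequential limit $X\in\D_k$ by compactness of $\D_k$, and then passes to the limit in the defining inequalities: for each fixed $B\in\D_k$ one has $\tr_k(X_nB)\le\norm{P_n^*U^*B\otimes I_n UP_n}$, and the right-hand side converges to $\normt{B}$ by the Collins--Male strong convergence theorem (almost surely, so fix a realization where this holds for a countable dense set of $B$, then extend by continuity). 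This gives $\tr_k(XB)\le\normt{B}$ for all $B$, i.e.\ $X\in K_{k,t}$, without any appeal to Hausdorff convergence. The equality $\tr_k(XA)=\normt{A}$ follows the same way. So your instinct that the limiting argument is cleanest was right; you just overestimated what it requires.

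Your primary approach via free-probabilistic states can be made to work, but the normalization worry you flag is a non-issue once you take $\xi$ to be a unit vector in the range of $p_t$ in the GNS space (then $p_t\xi=\xi$, so $\norm{p_t\xi}=1$ and there is nothing to divide by); the map $B\mapsto\langle\xi,B\xi\rangle$ is then a state on $\M_k(\C)$, hence represented by some $X_\xi\in\D_k$, and $\tr_k(X_\xi B)=\langle\xi,p_tBp_t\,\xi\rangle\le\normt{B}$ shows $X_\xi\in K_{k,t}$. This buys a self-contained argument that avoids random matrices entirely, at the cost of setting up the GNS machinery. The paper's route is shorter because it reuses Proposition~\ref{4defeq} verbatim and only needs one external input (Collins--Male), which is already invoked elsewhere.
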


\begin{proof}
	Thanks to Proposition \ref{4defeq} we know that for any $n$, $\{ X\in K_{n,k,t} \ |\ \tr_k(XA) = \norm{P_n^* U^* A\otimes I_n U P_n} \}$ is non-empty. Hence there exists $X_n$ such that:
	\begin{itemize}
		\item $\tr_k(X_nA) = \norm{P_n^* U^* A\otimes I_n U P_n} $,
		\item $\forall B\in\D_k$, $\tr_k(X_nB) \leq \norm{P_n^* U^* B\otimes I_n U P_n} $.
	\end{itemize}
	By compactness of $\D_k$, we can assume that $X_n$ converges towards a limit $X$. But then as we said in the previous section, thanks to Theorem 1.4 from \cite{collins_male}, $\norm{P_n^* U^* B\otimes I_n U P_n}$ converges towards $\normt{B}$. Thus $X$ is such that:
	\begin{itemize}
		\item $\tr_k(XA) = \normt{A} $,
		\item $\forall B\in\D_k$, $\tr_k(XB) \leq \normt{B} $.
	\end{itemize}
	That is, $X$ belongs to $\{ X\in K_{k,t} \ |\ \tr_k(XA) = \normt{A} \}$.
\end{proof} 

We can now prove Proposition \ref{4haussd}.

\begin{proof}[Proof of Proposition \ref{4haussd}]
	We assume that $K_{n,k,t} \not\subset K_{k,t} + \varepsilon$, then thanks to the compactness of  $K_{n,k,t}$ and $K_{k,t}$, we can find $X\in K_{k,t}$ and $Y\in K_{n,k,t}$ such that $\norm{X-Y}_2 > \varepsilon$, and $K_{k,t} \cap B(Y, \norm{X-Y}_2)$ is empty. We set $V = \frac{Y - X}{\norm{Y-X}_2}$, $A =\frac{1}{k} \left(V+I_k\right)$, then $A\in \D_k$. We are going to show that $\norm{P_n^* U^* A\otimes I_n U P_n} > \normt{A} + \frac{\varepsilon}{k}$. To do so we define
	$$ P_C = \left\{ B\in K_{k,t} \ \middle| \ \tr_k(AB) = \frac{C+1}{k} \right\}
	= \left\{ B\in K_{k,t} \ \middle| \ \tr_k\left(VB\right) = C \right\} .$$
	
	\noindent Let us assume that for $C>\tr_k(VX)$, $P_C$ is not empty, then let $S\in P_C$. We can write $C = \tr_k(V(X+tV))$ for some $t>0$, thus $\tr_k(VS) = \tr_k(V(X+tV))$, that is $\tr_k((Y-X)(X-S))= - t \norm{Y-X}_2$. Hence the following estimate:
	\begin{align*}
	\norm{Y-(\alpha X + (1-\alpha) S)}_2^2 &= \tr_k\Big( \big( Y-X + (1-\alpha) (X-S)\big)^2 \Big) \\
	&=\norm{Y-X}_2^2 - 2t(1-\alpha)\norm{Y-X}_2 + \bigcirc((1-\alpha)^2) .
	\end{align*}
	
	Consequently since $K_{k,t}$ is convex, for any $\alpha$, $\alpha X + (1-\alpha) S\in K_{k,t}$, thus for $1-\alpha$ 
	small enough we could find an element of $K_{k,t}$ in $B(Y, \norm{X-Y}_2)$. Hence the contradiction. Thus for 
	$C > \tr_k(VX)$, $P_C$ is empty. By Lemma \ref{4nonempty}, we get that $\frac{\tr_k(VX)+1}{k} \geq \normt{A}$. Next we define
	$$ Q_C = \left\{ B\in K_{n,k,t} \ \middle| \ \tr_k(AB) = \frac{C+1}{k} \right\}
	= \left\{ B\in K_{n,k,t} \ \middle| \ \tr_k\left(VB\right) = C \right\} .$$
	Then clearly for $C=\tr_k(VY)$, $Q_C$ is non-empty since $Y\in Q_{\tr_k(VY)}$. Hence thanks to the geometric definition \eqref{4secondedef} of $K_{n,k,t}$, we have that $ \frac{\tr_k(VY)+1}{k} \leq \norm{P_n^* U^* A\otimes I_n U P_n}$. 
	Thus we have,
	$$ \norm{P_n^* U^* A\otimes I_n U P_n} \geq  \frac{\tr_k(V(Y-X))}{k} + \normt{A} = \frac{\norm{Y-X}_2}{k} + \normt{A} > \frac{\varepsilon}{k} + \normt{A} .$$
\end{proof}

Actually with a very similar proof, we could even show that almost surely there exist $A\in\D_k$ such that 
$$ d_H(K_{n,k,t},K_{k,t}) = k\times \left| \norm{P_n^* U^* \cdot A\otimes I_n \cdot U P_n} - \normt{A} \right|, $$
where $d_H$ is the Hausdorff distance associated to the norm $\norm{.}_2$ which comes from the scalar product $(U,V)\mapsto\tr_k(UV)$.
However this result will not be useful in this paper since the absolute value would be detrimental for the computation of our estimate.
The following lemma is a rather direct consequence of the previous proposition.

\begin{lemma}
	\label{4ptilemme}
	We set $\M_k(\C)_{sa}$ the set of self-adjoint matrices of size $k$. Let $u>0$, let $\mathcal{S}_u = \{ u M\  |\ M\in\M_k(\C)_{sa},\ \forall i\geq j,\ \Re(m_{i,j})\in \left\{\N+ \frac{1}{2}\right\}\cap [0,\lceil u^{-1} \rceil],\ \forall i>j,\ \Im(m_{i,j})\in \left\{\N+ \frac{1}{2}\right\}\cap [0,\lceil u^{-1} \rceil]  \}$, let $P_{\D_k}$ be the convex projection on $\D_k$. Then with $u = \frac{\sqrt{2}\varepsilon}{3k^2}$,
	
	$$ \P\left(K_{n,k,t} \not\subset K_{k,t} + \varepsilon\right) \leq \sum_{M\in \mathcal{S}_u} \P\left( \norm{P_n^* U^* (P_{\D_k}M\otimes I_n) U P_n} > \normt{P_{\D_k}M} + \frac{\varepsilon}{3k} \right) .$$
\end{lemma}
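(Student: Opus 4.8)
The idea is to discretize the set $\D_k$ (or rather a slightly larger cube containing it) by a net $\mathcal{S}_u$, apply Proposition \ref{4haussd} with $\varepsilon$ replaced by $\varepsilon$, and then replace the supremum over all $A\in\D_k$ appearing there by a finite union bound over the net, at the cost of splitting the error budget $\varepsilon/k$ into three equal pieces of size $\varepsilon/(3k)$. Concretely, by Proposition \ref{4haussd}, the event $\{K_{n,k,t}\not\subset K_{k,t}+\varepsilon\}$ is contained in the event that there exists $A\in\D_k$ with $\norm{P_n^* U^* A\otimes I_n U P_n} > \normt{A} + \varepsilon/k$. So it suffices to cover $\D_k$ by balls of a controlled radius centered at points of the form $P_{\D_k}M$, $M\in\mathcal{S}_u$, and argue that if the bad event happens at some $A$, then it happens (with a weakened threshold $\varepsilon/(3k)$) at the nearest net point.

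The key steps, in order, would be: (1) Show that $\mathcal{S}_u$ is a net for $\D_k$ in the following sense: every $A\in\D_k$ lies within $\norm{\cdot}_2$-distance something like $Ck^{3/2}u = C'\varepsilon/k^{1/2}$ — actually one wants the projected net points $P_{\D_k}M$ to be within distance $\le \varepsilon/(3k)\cdot(\text{Lipschitz constant})^{-1}$ of $A$; the scaling $u = \sqrt2\,\varepsilon/(3k^2)$ is chosen precisely so that, after accounting for the $k^2$-ish real parameters (the entries $m_{i,j}$) each discretized at scale $u$ in the $\norm{\cdot}_2$ metric $\sqrt{\tr_k(M^*M)}$, the Frobenius-type distance from any self-adjoint matrix with entries in $[0,1]$ to the nearest net point is at most $\varepsilon/(3k)$ (the half-integer offsets $\N+\tfrac12$ ensure the cube $[0,1]$ in each coordinate is covered without boundary issues, and $\lceil u^{-1}\rceil$ makes the net finite). (2) Establish two Lipschitz-type bounds: $A\mapsto \norm{P_n^* U^* A\otimes I_n U P_n}$ is $1$-Lipschitz for the operator norm and hence $\le$ (something) $\cdot\norm{\cdot}_2$-Lipschitz; similarly $A\mapsto\normt{A}=\norm{p_tAp_t}$ is $1$-Lipschitz in operator norm; and $P_{\D_k}$ is $1$-Lipschitz for $\norm{\cdot}_2$ since it is a convex projection. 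Combining, if $A$ is close to $P_{\D_k}M$ in $\norm{\cdot}_2$, then both $\norm{P_n^* U^* A\otimes I_n U P_n}$ and $\normt{A}$ are close to their values at $P_{\D_k}M$. (3) Conclude: if the bad event holds at $A$ with threshold $\varepsilon/k$, choose $M\in\mathcal{S}_u$ with $\norm{A-P_{\D_k}M}_2$ small enough that each of the two approximation errors is $<\varepsilon/(3k)$; then $\norm{P_n^* U^* (P_{\D_k}M\otimes I_n) U P_n} > \normt{P_{\D_k}M} + \varepsilon/(3k)$. A union bound over the finitely many $M\in\mathcal{S}_u$ finishes it.

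The main obstacle — or at least the step requiring the most care — is the bookkeeping in step (1): getting the constant in the net scaling exactly right so that $u = \sqrt2\,\varepsilon/(3k^2)$ yields net points within the required $\norm{\cdot}_2$-distance. One must be careful that $\norm{M}_2 = \sqrt{\tr_k(M^*M)}$ is the \emph{normalized} trace norm (dividing by $k$), that a self-adjoint $k\times k$ matrix has about $k^2$ free real coordinates (the $k$ diagonal real entries and the $k(k-1)/2$ complex off-diagonal entries, i.e.\ $k^2$ real degrees of freedom), and that discretizing each at scale $u$ produces a per-coordinate error $\le u/2$, hence a total squared error $\le k^2\cdot(u/2)^2$ in the unnormalized Frobenius norm, which after normalization and taking square roots gives $\approx u\sqrt{k^2}/(2\sqrt k) = uk/(2\sqrt k)$ — one then checks this is $\le \varepsilon/(6k)$ say, leaving room for the projection step. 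The second delicate point is verifying that projecting the net points onto $\D_k$ does not destroy the covering property, which is immediate from the $1$-Lipschitz property of $P_{\D_k}$ together with the fact that $\D_k$ itself sits inside the cube being discretized (so the nearest cube-net point to $A\in\D_k$ projects back to something still close to $A$). Everything else — the two Lipschitz bounds and the union bound — is routine.
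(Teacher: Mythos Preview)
Your plan is essentially the paper's proof: apply Proposition~\ref{4haussd} to reduce to the event $\{\exists A\in\D_k:\ \norm{P_n^*U^*A\otimes I_n UP_n}>\normt{A}+\varepsilon/k\}$, pick for each $A$ a nearby net point $M\in\mathcal S_u$, use that both $A\mapsto\norm{P_n^*U^*A\otimes I_n UP_n}$ and $A\mapsto\normt{A}$ are $1$-Lipschitz for the operator norm (hence for $\norm{\cdot}_2$) together with the $1$-Lipschitz property of the convex projection $P_{\D_k}$ to transfer the inequality to $P_{\D_k}M$ with threshold $\varepsilon/(3k)$, and conclude by a union bound over $\mathcal S_u$.

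One correction to your bookkeeping: in this paper $\tr_k$ denotes the \emph{unnormalized} trace, so $\norm{M}_2=\sqrt{\tr_k(M^*M)}$ is the ordinary Frobenius norm, not the normalized one. The estimate you actually want is: each of the $k^2$ real coordinates of $A-M$ is at most $u/2$ in absolute value, so $\norm{A-M}_2^2\le k(u/2)^2 + 2\cdot\frac{k(k-1)}{2}\cdot 2(u/2)^2 \le k^2 u^2/2$, hence $\norm{A-M}_2\le ku/\sqrt 2=\varepsilon/(3k)$ with the stated choice of $u$. Since $P_{\D_k}A=A$, this gives $\norm{A-P_{\D_k}M}_2\le\norm{A-M}_2\le\varepsilon/(3k)$, which is exactly the budget needed.
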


\begin{proof}
	We immediately get from proposition \ref{4haussd} that 
	$$ \P\left(K_{n,k,t} \not\subset K_{k,t} + \varepsilon\right) \leq \P\left( \exists A\in \D_k,\ \norm{P_n^* U^* \cdot A\otimes I_n \cdot U P_n} > \normt{A} + \frac{\varepsilon}{k} \right) .$$
	
	Now, let $A\in \D_k$, by construction of $\mathcal{S}_u$, there exists $M\in \mathcal{S}_u$ such that the real part and the imaginary part of the coefficients of $M$ are $u/2$-close from those of $A$. Thus we have $\norm{A-M}_2 \leq \frac{ku}{\sqrt{2}}$. Hence if we fix $u = \frac{\sqrt{2}\varepsilon}{3k^2}$, then we can always find $M\in \mathcal{S}_u$ such that $\norm{A-M}_2 \leq \frac{\varepsilon}{3k}$. Besides we have,
	$$ \left| \normt{A} - \normt{P_{\D_k}M} \right| \leq \norm{A-P_{\D_k}M} \leq \norm{A-P_{\D_k}M}_2 ,$$
	$$ \Big| \norm{P_n^* U^* A\otimes I_n U P_n} - \norm{P_n^* U^* (P_{\D_k}M \otimes I_n) U P_n} \Big| \leq \norm{A-P_{\D_k}M} \leq \norm{A-P_{\D_k}M}_2 .$$
	Hence since $P_{\D_k}A = A$ and that $P_{\D_k}$ is $1$-lipschitz as a function on $\M_k(\C)$ endowed with the norm $\norm{.}_2$, we have $\norm{A-P_{\D_k}M}_2 \leq \norm{A-M}_2 \leq \frac{\varepsilon}{3k}$. Consequently,
	$$ \left\{ \norm{P_n^* U^* \cdot A\otimes I_n\cdot  U P_n} > \normt{A} + \frac{\varepsilon}{k} \right\} \subset \left\{ \norm{P_n^* U^* (P_{\D_k}M\otimes I_n) U P_n} > \normt{P_{\D_k}M} + \frac{\varepsilon}{3k} \right\} .$$
	Hence,
	\begin{align*}
	&\left\{ \exists A\in \D_k,\ \norm{P_n^* U^* \cdot A\otimes I_n\cdot U P_n} > \normt{A} + \frac{\varepsilon}{k} \right\} \\
	&\subset \bigcup_{M\in \mathcal{S}_u} \left\{ \norm{P_n^* U^* (P_{\D_k}M\otimes I_n) U P_n} > \normt{P_{\D_k}M} + \frac{\varepsilon}{3k} \right\} .
	\end{align*}
	The conclusion follows.
	
\end{proof}

The next lemma  shows that there exist a smooth function which verifies some assumptions on the infinite norm of its derivatives.

\begin{lemma}
	\label{4fonctionsmooth}
	There exists $g$ a $\mathcal{C}^{6}$ function which takes value $0$ on $(-\infty,0]$ and value $1$ on $[1,\infty)$, and in $[0,1]$ otherwise. Besides for any $j\leq 6$, $\norm{g^{(j)}}_{\infty} = 2^{\frac{j(j+1)}{2}}$.
\end{lemma}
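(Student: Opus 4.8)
The plan is to construct $g$ explicitly by iterated integration of a suitable piecewise-polynomial bump, keeping careful track of how the sup-norms of successive antiderivatives grow. First I would build a ``mother'' function on $[0,1]$ whose integral over $[0,1]$ equals $1$ and which vanishes together with as many derivatives as needed at the endpoints $0$ and $1$, so that gluing with the constants $0$ and $1$ outside $[0,1]$ produces a function of the required smoothness class. A convenient choice is to start from $h_0 = \mathbf{1}_{[0,1]}$ and repeatedly apply the ``integrate and renormalize'' operator: given $h_j$ supported on $[0,1]$ with $\int_0^1 h_j = 1$, set $h_{j+1}(x) = c_{j+1}\int_0^x h_j(s)\,ds \cdot \mathbf{1}_{[0,1]}(x) + \mathbf{1}_{[1,\infty)}(x)$-type corrections, arranging the constants so that the running integral again integrates to $1$ on $[0,1]$ and matches the boundary values. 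Iterating this six times and then integrating once more yields a $\mathcal{C}^6$ function $g$ interpolating from $0$ to $1$.

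The key computational step is the bookkeeping of the constants. Each stage multiplies the function by a normalizing constant and integrates; the point is to show the constants can be chosen so that $\norm{g^{(j)}}_\infty = 2^{j(j+1)/2}$ for $j \le 6$. Note that $j(j+1)/2$ is the $j$-th triangular number, so $2^{j(j+1)/2}/2^{(j-1)j/2} = 2^{j}$: passing from the $j$-th derivative to the $(j-1)$-th should cost exactly a factor $2^j$ in sup-norm. Concretely, I would show by downward induction that if $g^{(j)}$ is supported (modulo constants) on $[0,1]$ with controlled sup-norm, then a primitive of it that still interpolates the right boundary data can be taken with sup-norm multiplied by $2^j$; the factor $2^j$ rather than $1$ accounts for the renormalization forced by requiring the primitive to again reach value $1$ (its integral over $[0,1]$ being at most $\norm{g^{(j)}}_\infty$ and at least... ) — one picks the scaling constant to be exactly $2^j$. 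Starting from $g^{(6)}$ with $\norm{g^{(6)}}_\infty = 2^{21}$ and a simple explicit formula (e.g. a scaled symmetric polynomial bump vanishing to order $5$ at both endpoints), one checks the base case directly and then unwinds.

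The main obstacle I anticipate is not conceptual but combinatorial: one must simultaneously enforce (i) the correct smoothness at the gluing points $0$ and $1$ (all derivatives up to order $6$ of the interior piece vanish there, or match the constant pieces), (ii) the normalization $g(0)=0$, $g(1)=1$, and (iii) the precise sup-norm values $2^{j(j+1)/2}$ — and these three requirements constrain the free constants tightly. It is plausible the authors simply exhibit one explicit polynomial for $g^{(6)}$ on $[0,1]$, say a constant multiple of $x^5(1-x)^5$ or of $\sin$-type expression, compute its iterated integrals in closed form, and verify the six sup-norm identities by locating the maxima (which for these symmetric bumps occur either at the midpoint or at the endpoints). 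The verification that the maxima fall exactly where claimed, and that the resulting values are exactly powers of $2$ and not merely bounded by them, is the delicate part; a slightly suboptimal choice would give $\le 2^{j(j+1)/2}$, and the paper has stated equality, so the construction must be pinned down rather than merely estimated. I would therefore spend most of the effort choosing the normalizing constants so that equality holds at every stage, rather than proving a clean inductive inequality.
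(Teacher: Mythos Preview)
Your sketch misses the structural constraint that actually drives the construction. If $g$ is $\mathcal C^6$, equals $0$ on $(-\infty,0]$ and $1$ on $[1,\infty)$, then writing $\phi=g^{(6)}|_{[0,1]}$ and using the Taylor remainder $g^{(j)}(x)=\int_0^x \frac{(x-t)^{5-j}}{(5-j)!}\phi(t)\,dt$ for $0\le j\le 5$, the boundary conditions $g^{(j)}(1)=0$ for $1\le j\le 5$ force
\[
\int_0^1 (1-t)^k\,\phi(t)\,dt=0,\qquad k=0,1,2,3,4.
\]
Your concrete candidate $\phi(t)=c\,t^5(1-t)^5$ violates all five of these (it is nonnegative and not identically zero), so the resulting $g^{(5)}$ would not vanish at $1$ and the gluing fails at the $\mathcal C^1$ level already. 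The ``start from $\mathbf 1_{[0,1]}$ and integrate/renormalize'' scheme has the same defect: a single integration of a nonnegative bump cannot return to $0$ at the right endpoint without a sign change, and your description never introduces one.

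The paper's device is precisely a mechanism to manufacture these vanishing moments while keeping sup-norms under exact control. It defines a ``fold--flip'' operator $H$ on $\mathcal C^0([0,1])$ by $(Hw)(t)=w(2t)$ on $[0,\tfrac12]$ and $(Hw)(t)=-w(2t-1)$ on $[\tfrac12,1]$, starts from the tent $f(t)=1-|1-2t|$, and takes $\phi=H^5 f$. A short change of variables gives
\[
\int_0^1 (1-t)^k\,Hw(t)\,dt=\frac{1}{2^{k+1}}\sum_{0\le i<k}\binom{k}{i}\int_0^1(1-t)^i w(t)\,dt,
\]
so each application of $H$ kills one more moment; five applications yield the required five vanishing moments, and one computes $\int_0^1\frac{(1-t)^5}{5!}H^5f(t)\,dt=2^{-21}$, whence $g=2^{21}h$ with $h(x)=\int_0^x\frac{(x-t)^5}{5!}H^5f(t)\,dt$. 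The sup-norm identities $\|g^{(j)}\|_\infty=2^{j(j+1)/2}$ then follow from the dyadic self-similarity of $H^5f$ (each antiderivative on $[0,1]$ is, up to reflection, a rescaled copy of an antiderivative of $H^4f$, picking up exactly a factor $2$ from the substitution and the remaining factors from the recursion); the paper itself only states this last point as ``further studies show''. Your observation that the ratio of successive sup-norms should be $2^j$ is correct, but the mechanism producing it is this self-similarity, not a free normalization constant.
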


\begin{proof}
	Firstly we define,
	$$ f:t\in[0,1]\mapsto \left\{
	\begin{array}{ll}
	2t & \text{ if } t\leq 1/2 \\
	2(1-t) & \text{ if } t\geq 1/2
	\end{array} \right., $$
	$$
	\begin{array}{ccccc}
	H & : & \mathbb{C}^0([0,1]) & \to & \mathbb{C}^0([0,1]) \\
	& & f & \mapsto & t \mapsto \left\{
	\begin{array}{ll}
	f(2t) & \text{ if } t\leq 1/2 \\
	-f(2t-1) & \text{ if } t\geq 1/2
	\end{array}
	\right. \\
	\end{array} .
	$$
	Inspired by Taylor's Theorem, we define 
	$$ h : x\in[0,1] \mapsto \int_{0}^{x} \frac{(x-t)^5}{5!}\ H^5f(t)\ dt . $$
	It is easy to see that $h\in\mathcal{C}^6([0,1])$ with 
	$$ \forall j\leq 5,\quad h^{(j)} : x\in[0,1] \mapsto \int_{0}^{x} \frac{(x-t)^{5-j}}{(5-j)!} H^5f(t)\ dt,\quad\quad h^{(6)} = H^5f  .$$
	Thus one can easily extend $h$ by $0$ on $\R^{-}$ and $h$ remains $\mathcal{C}^6$ in $0$, as for what happens in $1$ it is way less obvious. In order to build $g$ we want to show that
	$$ \forall 1\leq j\leq 6,\quad h^{(j)}(1) =0,\quad\quad h(1)>0 .$$
	To do so let $w\in\mathcal{C}^0([0,1])$, then for any $k\geq 0$,
	\begin{align*}
	\int_0^1 (1-t)^k Hw(t) dt &= \int_0^{1/2} (1-t)^k w(2t) dt - \int_{1/2}^{1} (1-t)^k w(2t-1) dt \\
	&= \frac{1}{2^{k+1}} \int_0^1 \left( \left(2-t\right)^k - \left(1-t\right)^k\right) w(t) dt \\
	&= \frac{1}{2^{k+1}} \int_0^1 \sum_{0\leq i <k} \binom{k}{i} (1-t)^i w(t) dt .\\		 
	\end{align*}
	Thus recursively one can show that $\forall 1\leq j\leq 6$, $h^{(j)}(1) =0$. We also get that 
	$$ h(1) = \int_{0}^{x} \frac{(1-t)^5}{5!}\ H^5f(t)\ dt = 2^{-\sum_{2\leq i \leq 6 } i}\int_{0}^{x} f(t)\ dt = 2^{-21} .$$
	Hence we fix $g = 2^{21}h$, further studies show that $\norm{g^{(j)}}_{\infty} = 2^{\frac{j(j+1)}{2}}$.
\end{proof}

In the next lemma, we prove a first rough estimate on the deviation of the norm with respect to its limit. It is the only one where we use that $d_n\leq tkn$. 

\begin{lemma}
	\label{4firstestimate}
	For any $A\in\D_k$, $\varepsilon>0$,
	\begin{equation}
	\label{4beline}
	\P\left( \norm{P_n^* U^* A\otimes I_n U P_n} \geq \normt{A} + \varepsilon \right) \leq 3\times 2^{23 }\times \frac{\ln^2(kn)}{kn} \varepsilon^{-4} .
	\end{equation}
	
\end{lemma}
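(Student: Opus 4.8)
The plan is to bound the expected norm $\E\,\norm{P_n^* U^* A\otimes I_n U P_n}$ and its fluctuations separately, using the smoothing function $g$ from Lemma~\ref{4fonctionsmooth} to pass between the (non-smooth) operator norm and quantities amenable to the method of \cite{deux}. First I would reduce to controlling, for a fixed real number $z>\normt{A}$, the probability that $\norm{P_n^* U^* A\otimes I_n U P_n} \geq z$. Writing $M = P_n^* U^* A\otimes I_n U P_n \in \M_d(\C)$, the event $\{\norm{M}\geq z\}$ is contained in $\{\tr_d\, g(M/z - I_d) \geq 1\}$ — or rather a suitably shifted/scaled version — since $g$ vanishes on $(-\infty,0]$ and equals $1$ on $[1,\infty)$, so an eigenvalue of $M$ exceeding $z$ forces the trace of $g$ applied to the shifted matrix to be at least $1$. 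By Markov's inequality this gives $\P(\norm{M}\geq z) \leq \E\,\tr_d\, g(\cdot)$.

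The heart of the argument is then an asymptotic expansion of $\E\,\tr_d\, g\big((P_n^* U^* A\otimes I_n U P_n - z I_d)/\delta\big)$ in powers of $1/(kn)$, of the type established in \cite{deux} (see also \cite{un}) for polynomials in Haar unitaries and deterministic matrices, extended here to the smooth function $g$ via its derivatives up to order $6$ (this is exactly why Lemma~\ref{4fonctionsmooth} asks for a $\mathcal{C}^6$ function with controlled $\norm{g^{(j)}}_\infty = 2^{j(j+1)/2}$). The leading term is $t\, g$ evaluated at the corresponding free object, which by the definition of $\normt{A}$ is $0$ as soon as $z > \normt{A}$; the correction is $O\big((kn)^{-1}\big)$ with a constant controlled by $\norm{g^{(j)}}_\infty$ for $j\leq 6$ and by $\norm{A}\leq 1$. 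Choosing the smoothing width $\delta$ comparable to $\varepsilon$ and $z = \normt{A}+\varepsilon/2$ (so that the free leading term still vanishes), the derivatives $g^{(j)}(\cdot/\delta)$ contribute factors $\delta^{-j}\lesssim \varepsilon^{-j}$; balancing the highest relevant order against the $(kn)^{-1}$ gain, one extracts the stated bound $3\times 2^{23}\,\ln^2(kn)\,(kn)^{-1}\varepsilon^{-4}$. The logarithmic factor $\ln^2(kn)$ comes from the norm estimates for the resolvent/Cauchy-transform steps in the interpolation of \cite{deux}, and the hypothesis $d_n \leq tkn$ is used precisely to ensure the projection $P_n$ has rank at most $tkn$, so that the relevant free projection has trace $\leq t$ and the limiting term $t\,g(\cdot) = 0$ is an \emph{upper} bound (an extra mass of rank would only help, never hurt, the inequality).

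The main obstacle I anticipate is twofold. First, one must carefully set up the free-probabilistic model: rewriting $P_n^* U^* A\otimes I_n U P_n$ so that the interpolation technology of \cite{deux} applies cleanly — this likely means passing through $\qc$ and the compression $P_n(\cdot)P_n^*$, identifying the limiting operator as $p_t (A\otimes 1) p_t$ with $p_t$ a trace-$t$ projection free from $A\otimes 1$, and matching this with $\normt{A} = \norm{p_t A p_t}$. Second, and more delicate, is tracking \emph{all} constants through the expansion: the bound $2^{23}$ is the product of the combinatorial constants in the order-6 Taylor-type expansion of $\E\,\tr\, g(\cdot)$ together with $\max_{j\leq 6}\norm{g^{(j)}}_\infty = 2^{21}$, and getting this exactly right (rather than merely $O(1)$) requires the explicit interpolation bounds rather than a soft limit argument. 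Everything else — Markov's inequality, the reduction to a fixed $z$, the choice $\delta \asymp \varepsilon$ — is routine once the expansion with explicit constants is in hand.
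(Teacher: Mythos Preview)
Your outline is essentially the paper's approach: smooth the indicator with $g$, apply Markov to the unnormalized trace, and use the expansion technology of \cite{deux} to show this expected trace is $O(\ln^2(kn)/(kn))\cdot\varepsilon^{-4}$ with explicit constants, the free leading term vanishing because $d_n\le tkn$ forces the support of $f_\varepsilon$ to miss the limiting spectrum.

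Two points where your description diverges from what actually happens and would matter if you tried to carry it out. First, there is no separate fluctuation step in this lemma: the single Markov bound $\P(\norm{M}\ge \normt{A}+\varepsilon)\le \E\,\tr_{kn} f_\varepsilon(M)$ already gives \eqref{4beline}; concentration is only brought in later, in Lemma~\ref{4sndestimate}, to sharpen this polynomial bound into an exponential one. Second, the factor $\ln^2(kn)$ does not arise from resolvent or Cauchy-transform estimates. The expansion in \cite{deux} goes through a \emph{Fourier} representation $f(x)=\int e^{ixy}\,d\mu(y)$ and a free unitary Brownian motion interpolation at time $T$: comparing $U$ to $Uu_T$ costs $(T/(kn))^2$ times moments $\int|y|^j\,d|\mu|(y)$ for $j\le 4$, while comparing $u_T$ to a free Haar unitary costs $e^{-T/2}$. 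Setting $T=4\ln(kn)$ balances these and produces the $\ln^2(kn)$. The reason six derivatives of $g$ are needed (rather than four) is that bounding $\int|y|^j|\widehat{f_\varepsilon}(y)|\,dy$ for $j\le 4$ requires control of $g^{(j+2)}$; this is where the constants $\norm{g^{(j)}}_\infty=2^{j(j+1)/2}$ from Lemma~\ref{4fonctionsmooth} enter and eventually aggregate into the $2^{23}$.
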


\begin{proof}
	
	For a better understanding of the notations and tools used in this proof, such as free stochastic calculus, we refer to \cite{deux}. 
	In particular $\tau_{kn}$ is the trace on the free product of $\M_{kn}(\C)$ with a $\CC^*$-algebra which contains a free unitary 
	Brownian motion, see Definition 2.8 of \cite{deux}. As for $\delta, \D$ and $\boxtimes$, see Definition 2.5 of \cite{deux} and 2.10 from \cite{trois}. If you are not familiar with free probability, it is possible to simply admit equation \eqref{4int1} to avoid having to understand the previous notations.
	
	Since $\norm{P_n^* U^* \cdot A\otimes I_n\cdot  U P_n} = \norm{P_nP_n^* U^*\cdot  A\otimes I_n\cdot U P_nP_n^*}$, we will 
	rather work with $P = P_nP_n^*$ since it is a square matrix. To simplify notations, instead of $A\otimes I_n$ we simply write $A$. 
	Let us now consider a function $f$ such that
	\begin{equation}
	\label{4fourrierhyp}
	\forall x\in\R,\quad f(x) = \int_{\R} e^{\i xy} d\mu(y),
	\end{equation}
	for some measure $\mu$. We set $v_s=U\otimes I_l \times V^{lkn}_{r-s}U^{lkn}_s$ and $w_s=U\otimes I_l \times W^{lkn}_{r-s}U^{lkn}_s$ where $(U^{mkn}_r)_{r\geq 0}$, $(V^{mkn}_r)_{r\geq 0}$ and $(W^{mkn}_r)_{r\geq 0}$ are independent unitary Brownian motions of size $mkn$ started in the identity. We also set $P_{l,l'} = I_{kn}\otimes E_{l,l'}$ where $E_{l,l'}$ is the matrix of size $m$ whom all coefficients are zero but the $(l,l')$ one which is $1$. Then thanks to Lemma 4.2, 4.6 and Corollary 3.3 from \cite{deux}, with $u_T$ a free unitary Brownian motion at time $T$ started in $1$,we have the following expression,
	\begin{align*}
	&\E\left[ \frac{1}{kn} \tr_{kn}\Big( f(P\ U^*A\ U\ P) \Big) \right] - \E\left[\tau_{kn}\Big( f(P\ u^*_T U^*\ A\ U u_T\ P)\Big)\right] \\
	&= \lim_{m\to\infty} \frac{1}{2m^2(kn)^3} \sum_{ 1\leq l,l'\leq m } \int \int_0^{T} \int_0^r \tr_{mkn}\Bigg( \delta\circ \delta^1 \circ \D \Big( e^{\i y Pv_s^*Av_sP} \Big)\widetilde{\#} P_{l',l}\  \\
	&\quad\quad\quad\quad\quad\quad\quad\quad\quad\quad\quad\quad\quad\quad\quad\quad\quad\quad \boxtimes\ \delta\circ \delta^2 \circ \D \Big( e^{\i y Pw_s^*Aw_sP} \Big)\widetilde{\#} P_{l,l'} \Bigg) ds dr\ d\mu(y) .
	\end{align*}
	Then if we set $R_1^s = Pv_s^*Av_sP$ and $R_2^s = Pw_s^*Aw_sP$ , after a lengthy computation,
	
	\begin{align*}
	&\tr_{mkn}\Bigg( \delta\circ \delta^1 \circ \D \Big( e^{\i y Pv_s^*Av_sP} \Big)\widetilde{\#} P_{l',l}\ \boxtimes\ \delta\circ \delta^2 \circ \D \Big( e^{\i y Pw_s^*Aw_sP} \Big)\widetilde{\#} P_{l,l'} \Bigg) \\
	=& - \i y \tr_{lkn}\Big( \delta (v_s^* A v_s)\widetilde{\#}P_{l',l} \times \delta(P e^{\i y R_2^s} P) \widetilde{\#}P_{l,l'} \Big) \nonumber \\
	& - \i y  \tr_{lkn}\Big( \delta(P e^{\i y R_1^s} P) \widetilde{\#}P_{l',l} \times \delta (w_s^* A w_s)\widetilde{\#}P_{l,l'} \Big) \nonumber \\
	& + y^2 \int_0^1 \tr_{lkn}\Big(  \delta(v_s^* A v_s P e^{\i y\alpha R_1^s} P v_s^* A v_s) \widetilde{\#}P_{l',l} \times \delta (P e^{\i y(1-\alpha) R_2^s} P)\widetilde{\#}P_{l,l'} \Big)\ d\alpha \nonumber \\
	& - y^2 \int_0^1 \tr_{lkn}\Big(  \delta(v_s^* A v_s P e^{\i y\alpha R_1^s} P) \widetilde{\#}P_{l',l} \times \delta ( w_s^* A w_sP e^{\i y(1-\alpha) R_2^s} P)\widetilde{\#}P_{l,l'} \Big)\ d\alpha \nonumber \\
	& + y^2 \int_0^1 \tr_{lkn}\Big(  \delta(P e^{\i y\alpha R_1^s} P) \widetilde{\#}P_{l',l} \times \delta ( w_s^* A w_sP e^{\i y(1-\alpha) R_2^s} P w_s^* A w_s )\widetilde{\#}P_{l,l'} \Big)\ d\alpha \nonumber \\
	& - y^2 \int_0^1 \tr_{lkn}\Big(  \delta(P e^{\i y\alpha R_1^s} P v_s^* A v_s) \widetilde{\#}P_{l',l} \times \delta ( P e^{\i y(1-\alpha) R_2^s} P w_s^* A w_s)\widetilde{\#}P_{l,l'} \Big)\ d\alpha \nonumber
	%&=  y^2 \tr_{lkn}\Bigg( (P_{l',l} v_s^*Av_s - v_s^*Av_s P_{l',l}) \nonumber\\
	%&\quad\quad\quad\quad\quad \times \int_0^1 \big( P e^{\i y R_2^s(1-\alpha)}P P_{l,l'} P e^{\i y R_2^s \alpha }P w_s^* A w_s - w_s^* A w_s P e^{\i y R_2^s(1-\alpha)}P P_{l,l'} P e^{\i y R_2^s \alpha }P \big)\ d\alpha \Bigg)   \nonumber\\
	%&+  y^2 \tr_{lkn}\Bigg( \int_0^1 \big( P e^{\i y R_1^s(1-\alpha)}P P_{l',l} P e^{\i y R_1^s \alpha }P v_s^* A v_s - v_s^* A v_s P e^{\i y R_1^s(1-\alpha)}P P_{l',l} P e^{\i y R_1^s \alpha }P \big)\ d\alpha   \nonumber\\
	%&\quad\quad\quad\quad\quad \times (P_{l,l'} w_s^*Aw_s - w_s^*Aw_s P_{l,l'}) \Bigg) \nonumber\\
	\end{align*}
	
	Since the norm of $A,P,v_s$ and $w_s$ are smaller than $1$, and that the rank of $P_{l,l'}$ is $kn$, by using the fact that the non-renormalized trace of a matrix of norm smaller than $1$ is smaller or equal to its rank, we finally get that for any $r$ and $s$,
	\begin{align*}
	&\frac{1}{kn}\left| \tr_{mkn}\Bigg( \delta\circ \delta^1 \circ \D \Big( e^{\i y Pv_s^*Av_sP} \Big)\widetilde{\#} P_{l',l}\ \boxtimes\ \delta\circ \delta^2 \circ \D \Big( e^{\i y Pw_s^*Aw_sP} \Big)\widetilde{\#} P_{l,l'} \Bigg) \right|\\
	&\leq 8y^2 + 2y^2 \int_0^1 (4+2\alpha |y|)\times 2(1-\alpha)|y|\ d\alpha + 2y^2 \int_0^1 (2+2\alpha |y|)\times (2+2(1-\alpha)|y|)\ d\alpha \\
	&= 16 y^2 + 16 |y|^3 + \frac{8}{3}y^4 .
	%&= 4 y^2 + 6 |y|^3 + \frac{4}{3}y^4 .
	\end{align*}
	
	\noindent Consequently, we get that 
	\begin{align*}
	&\left|\E\left[ \frac{1}{kn} \tr_{kn}\Big( f(P\ U^*A\ U\ P) \Big) \right] - \E\left[\tau_{kn}\Big( f(P\ u^*_T U^*\ A\ U u_T\ P)\Big)\right]\right| \\
	&\leq \frac{T^2}{(kn)^2} \int 4 y^2 + 4 |y|^3 + \frac{2}{3}y^4\ d|\mu|(y).
	\end{align*}
	
	\noindent Thanks to Proposition 3.2 from \cite{deux}, we get that for any $T\geq 5$, there exist a free Haar unitary $\widetilde{u}_T$ such that
	\begin{align*}
	&\left| \tau_{kn}\Big( e^{\i y P u^*_TU^* A U u_T P}\Big) - \tau\Big( e^{\i y P \widetilde{u}_T^* A \widetilde{u}_T P}\Big) \right| \\
	&=\left| \tau_{kn}\Big( e^{\i y P u^*_TU^* A U u_T P}\Big) - \tau\Big(  e^{\i y P \widetilde{u}_T^*U^* A U \widetilde{u}_T P}\Big) \right| \\
	&= \left| y \int_{0}^{1} \tau\left( e^{\i \alpha y P u^*_T A u_T P}P\ (u^*_T U^*AU u_T - \widetilde{u}^*_T U^*AU \widetilde{u}_T) P\  e^{\i (1-\alpha) y P\ \widetilde{u}^*_T A \widetilde{u}_T P} \right) d\alpha\right| \\
	&\leq 8 e^2 \pi e^{-T/2} |y| .
	\end{align*}
	Consequently with $u$ a free Haar unitary, if the support of $f$ and the spectrum of $P u^*A u P$ are disjoint, then $\tau\Big( f(P\ u^*\ A\otimes I_n\ u\ P)\Big) =0$, and 
	\begin{align}
	\label{4int1}
	&\left| \E\left[ \frac{1}{kn} \tr_{kn}\Big( f(P\ U^*\ A\otimes I_n\ U\ P) \Big) \right] \right| \\
	&\leq 8 e^2 \pi e^{-T/2} \int |y|\ d|\mu|(y) + \left(\frac{T}{kn}\right)^2 \int 4 y^2 + 4 |y|^3 + \frac{2}{3}y^4\ d|\mu|(y) . \nonumber
	\end{align}
	Let $g$ be a $\mathcal{C}^{6}$ function which takes value $0$ on $(-\infty,0]$ and value $1$ on $[1,\infty)$, and in $[0,1]$ otherwise. 
	We set $f_{\varepsilon}:x\mapsto g(2\varepsilon^{-1} (x - \alpha) - 1)g(2\varepsilon^{-1} (1-x)+1)$ with $\alpha = \normt{A}$. Then the 
	support of $f_{\varepsilon}$ is included in $[\normt{A},\infty)$, whereas the spectrum of $P u^*A u P$ is bounded by 
	$\norm{P u^*A u P} = \norm{A}_{(d_n(kn)^{-1})} \leq \normt{A}$ since $d_n\leq tkn$. Hence $f_{\varepsilon}$ satisfies \eqref{4int1}. 
	Setting $h:x\mapsto g(x - 2 \varepsilon^{-1}\alpha -1) g( 2\varepsilon^{-1} +1  -x)$, we have with convention 
	$\widehat{f}(x) = (2\pi)^{-1} \int_{\R} f(y) e^{-\i xy} dy$, for $1\leq k\leq 4$ and any $\beta>0$,	
	\begin{align*}
	\int |y|^k |\hat{f_{\varepsilon}}(y)|\ dy &= \frac{1}{2\pi} \int |y|^k \left|\int g(2\varepsilon^{-1} (x - \alpha) - 1)g(2\varepsilon^{-1} (1-x)+1) e^{-\i y x}\ dx \right|\ dy \\
	&= \frac{1}{2\pi} \int |y|^k \left|\int h(\beta x) e^{-\i y \varepsilon\beta x /2}\ \frac{\varepsilon\beta}{2} dx \right|\ dy \\
	&= \frac{1}{2\pi} 2^k\varepsilon^{-k} \beta^{-k} \int |y|^k \left|\int h(\beta x) e^{-\i y x}\ dx \right|\ dy \\
	&\leq \frac{1}{2\pi} 2^k\varepsilon^{-k} \int \frac{1}{1+y^2}\ dy \int ( |h^{(k)}(\beta x)| + \beta^2 |h^{(k+2)}(\beta x)|)\ dx \\
	&\leq 2^{k}\varepsilon^{-k} \left(\beta^{-1}\norm{g^{(k)}}_{\infty} + \beta \norm{g^{(k+2)}}_{\infty}\right) \.
	\end{align*}
	In the last line we used the fact that we can always assume that $\alpha+\varepsilon\leq 1$ (otherwise 
	$$\P( \norm{P_n^* U^* \cdot A\otimes I_n\cdot U P_n} \geq \normt{A} + \varepsilon ) = 0$$ and there is no need to do any 
	computation). Consequently $2\alpha \varepsilon^{-1} +2 \leq 2\varepsilon^{-1}$, and since the support of the derivatives of $x\mapsto g(x - 2 \varepsilon^{-1}\alpha -1)$ and those of $x\mapsto g( 2\varepsilon^{-1} +1  -x)$ are respectively included in $[2\alpha \varepsilon^{-1} +1,2\alpha \varepsilon^{-1} +2]$ and $[2 \varepsilon^{-1},2\varepsilon^{-1} +1]$, they are disjoint. Thus by fixing $\beta = \sqrt{\norm{g^{(k)}}_{\infty} \norm{g^{(k+2)}}_{\infty}^{-1}}$ 
	we get 
	\begin{align*}
	\int |y|^k |\hat{f_{\varepsilon}}(y)|\ dy &\leq 2^{k+1}\varepsilon^{-k} \sqrt{\norm{g^{(k)}}_{\infty} \norm{g^{(k+2)}}_{\infty}} \.
	\end{align*} 
	Consequently, since $f_{\varepsilon}$ satisfies \eqref{4fourrierhyp} with $d\mu(y) = \widehat{f_{\varepsilon}}(y) dy$, by using \eqref{4int1} we get 
	
	\begin{align*}
	&\quad \left| \E\left[ \frac{1}{kn} \tr_{kn}\Big( f_{\varepsilon}(P\ U^*\ A\otimes I_n\ U\ P) \Big) \right] \right| \nonumber \\
	&\leq 2^5 e^2 \pi e^{-T/2} \sqrt{\norm{g^{(1)}}_{\infty} \norm{g^{(3)}}_{\infty}} \varepsilon^{-1} + \left(\frac{T}{kn}\right)^2 2^5\varepsilon^{-2} \sqrt{\norm{g^{(2)}}_{\infty} \norm{g^{(4)}}_{\infty}} \\
	&\quad + \left(\frac{T}{kn}\right)^2 2^6\varepsilon^{-3} \sqrt{\norm{g^{(3)}}_{\infty} \norm{g^{(5)}}_{\infty}} + \left(\frac{T}{kn}\right)^2\frac{2^6}{3}\varepsilon^{-4} \sqrt{\norm{g^{(4)}}_{\infty} \norm{g^{(6)}}_{\infty}}. \nonumber
	\end{align*}
	Combined with Lemma \ref{4fonctionsmooth} and fixing $T = 4\ln(kn)$, we get
	
	\begin{align*}
	&\quad \left| \E\left[ \frac{1}{kn} \tr_{kn}\Big( f_{\varepsilon}(P\ U^*\ A\otimes I_n\ U\ P) \Big) \right] \right| \\
	&\leq 2^{17/2} e^2 \pi\ \frac{\varepsilon^{-1}}{(kn)^2} + 2^{31/2} \left(\frac{\ln(kn)}{kn}\right)^2 \varepsilon^{-2} + 2^{41/2}\left(\frac{\ln(kn)}{kn}\right)^2 \varepsilon^{-3} + \frac{2^{51/2}}{3} \left(\frac{\ln(kn)}{kn}\right)^2 \varepsilon^{-4} .
	\end{align*}
	Since for any $n$, almost surely $\norm{P_n^* U^* \cdot A\otimes I_n\cdot U P_n} \leq 1$, we have 
	\begin{align*}
	&\P\left( \norm{P_n^* U^* A\otimes I_n U P_n} \geq \normt{A} + \varepsilon \right) \\
	&= \P\Big( \exists \lambda\in \sigma(P U^* A\otimes I_n U P),\ f_{\varepsilon}(\lambda) = 1 \Big) \\
	&\leq \P\Big( \tr_{kn}\Big(f_{\varepsilon}(P^* U^* A\otimes I_n U P)\Big) \geq 1 \Big) \\
	&\leq \E\left[\tr_{kn}\Big( f_{\varepsilon}(P\ U^*\ A\otimes I_n\ U\ P) \Big) \right] \\
	&\leq 2^{\frac{17}{2}} e^2 \pi\ \frac{\varepsilon^{-1}}{kn} + \frac{\ln^2(kn)}{kn} \left(2^{31/2} \varepsilon^{-2} + 2^{41/2} \varepsilon^{-3} + \frac{2^{51/2}}{3} \varepsilon^{-4} \right) .
	\end{align*}
	One can always assume that $\ln^2(kn)\geq 1$ since for small value of $k$ and $n$, \eqref{4beline} is easily verified since the right-hand side of the inequality is larger than $1$. One can also assume that $\varepsilon<1$ since almost surely $\norm{P_n^* U^* A\otimes I_n U P_n}\leq 1$. We get the conclusion by a numerical computation.
	
\end{proof}

We can now refine this inequality by relying on corollary 4.4.28 of \cite{alice}, we state the part that we will be using in the next proposition.

\begin{prop}
	\label{4hyp}
	We set $S\U_N = \{X\in \U_N \ |\ \det(X)=1 \}$, let $f$ be a continuous, real-valued function on $\U_N$. We assume that there exists a constant $C$ such that for every $X,Y\in\U_N$,
	
	\begin{equation}
	|f(X)-f(Y)|\leq C \norm{X-Y}_2
	\end{equation}
	Then if we set $\nu_{S\U_N}$ the law of the Haar measure on $S\U_N$, with $U$ a Haar unitary matrix, for all $\delta>0$:
	\begin{equation}
	\P\left( \left|f(U) - \int f(YU) d\nu_{S\U_N}(Y) \right| \geq \delta\right) \leq 2 e^{-\frac{N\delta^2}{4C^2}}
	\end{equation}
\end{prop}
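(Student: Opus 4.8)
The plan is to deduce the statement from the concentration of measure on the \emph{special} unitary group $S\U_N$ given by corollary 4.4.28 of \cite{alice}. This works for $S\U_N$ because it has Ricci curvature bounded below and hence a Gaussian-type concentration inequality for $\norm{\cdot}_2$-Lipschitz functions, whereas $\U_N$ itself carries a ``flat'' central $\U_1$-direction (the determinant) along which no such inequality can hold. The reason the \emph{random} quantity $\int f(YU)\,d\nu_{S\U_N}(Y)$ shows up as the centering term, instead of $\E[f(U)]$, is that it is exactly the conditional expectation $\E\big[f(U)\,\big|\,\det(U)\big]$, so conditioning on $\det(U)$ freezes precisely the randomness that the $S\U_N$-concentration cannot control.

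First I would disintegrate the Haar measure of $\U_N$ along the determinant. The map $\det\colon\U_N\to\U_1$ pushes the Haar measure on $\U_N$ to the Haar measure on $\U_1$, and for each $\zeta\in\U_1$ the fibre $\det^{-1}(\zeta)$ is a single left coset $S\U_N\,U_\zeta$, where $U_\zeta$ is any fixed unitary with $\det(U_\zeta)=\zeta$. By uniqueness of Haar measure on the compact group $S\U_N$, the regular conditional law of $U$ given $\det(U)=\zeta$ is the image of $\nu_{S\U_N}$ under $Y\mapsto YU_\zeta$; equivalently, conditionally on $\det(U)=\zeta$ one may write $U=Y_0U_\zeta$ with $Y_0\sim\nu_{S\U_N}$.

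Next, fix $\zeta$ and apply corollary 4.4.28 of \cite{alice} to the function $g_\zeta\colon Y\in S\U_N\mapsto f(YU_\zeta)$. Since $\norm{MU_\zeta}_2=\norm{M}_2$ for every matrix $M$, right multiplication by $U_\zeta$ is an $\norm{\cdot}_2$-isometry, so $g_\zeta$ is again $C$-Lipschitz and, with $Y_0\sim\nu_{S\U_N}$,
\[
\P\!\left( \big| g_\zeta(Y_0) - {\textstyle\int} g_\zeta \, d\nu_{S\U_N} \big| \geq \delta \right) \leq 2\, e^{-\frac{N\delta^2}{4C^2}} .
\]
It remains to recognize the centering term: by invariance of $\nu_{S\U_N}$ under translation, $\int f(YY_0U_\zeta)\,d\nu_{S\U_N}(Y)=\int f(YU_\zeta)\,d\nu_{S\U_N}(Y)=\int g_\zeta\, d\nu_{S\U_N}$ for every $Y_0\in S\U_N$. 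Hence, on the event $\{\det(U)=\zeta\}$, writing $U=Y_0U_\zeta$ we have simultaneously $f(U)=g_\zeta(Y_0)$ and $\int f(YU)\,d\nu_{S\U_N}(Y)=\int g_\zeta\, d\nu_{S\U_N}$, so the displayed bound is a bound on $\P\big(|f(U)-\int f(YU)\,d\nu_{S\U_N}(Y)|\geq\delta \,\big|\, \det(U)=\zeta\big)$, uniformly in $\zeta$. Integrating over $\zeta$ against the Haar measure on $\U_1$ then gives the claim.

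The only genuinely delicate point is the disintegration, i.e.\ the identification $\int f(YU)\,d\nu_{S\U_N}(Y)=\E[f(U)\,|\,\det(U)]$: one must check that the conditional law of $U$ given its determinant is \emph{exactly} translated Haar measure on $S\U_N$ — not merely some measure supported on the coset — which is where uniqueness of Haar measure is used, and that translation-invariance of $\nu_{S\U_N}$ makes this centering term measurable with respect to $\det(U)$. Everything else — the Lipschitz bookkeeping under multiplication by a unitary, and matching the constant $4C^2$ with the normalization of $\norm{\cdot}_2$ used in \cite{alice} — is routine.
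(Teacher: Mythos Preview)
The paper does not actually give a proof of this proposition: it introduces it with the sentence ``We can now refine this inequality by relying on corollary 4.4.28 of \cite{alice}, we state the part that we will be using in the next proposition'' and then moves on. In other words, the proposition is presented as a direct restatement of the cited result, with no argument supplied.

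Your proposal is correct and supplies exactly the justification the paper omits. Corollary 4.4.28 of \cite{alice} is a concentration inequality on $S\U_N$, not on $\U_N$, so some passage is genuinely needed; your disintegration of Haar measure on $\U_N$ along $\det$, the observation that right multiplication by $U_\zeta$ preserves the $\norm{\cdot}_2$-Lipschitz constant, and the identification of the (random) centering $\int f(YU)\,d\nu_{S\U_N}(Y)$ with the fibrewise mean via left-invariance of $\nu_{S\U_N}$, are all correct and constitute the natural reduction. The paper implicitly relies on this same reduction (note that in the subsequent Lemma~\ref{4sndestimate} it immediately exploits the fact that for its particular $f$ the random centering collapses to $\E[f(U)]$ because $f(\alpha X)=f(X)$ for scalars $\alpha$), but never spells it out.
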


\begin{lemma}
	\label{4sndestimate}
	For any $A\in\D_k$, $\varepsilon>0$, if $\ kn \geq 2^{31}\times \ln^2(kn) \times \varepsilon^{-4}$, we have 
	\begin{equation*}
	\P\left( \norm{P_n^* U^* \cdot A\otimes I_n\cdot  U P_n} \geq \normt{A} + \varepsilon \right) \leq 2 e^{-kn\times \frac{\varepsilon^2}{64}} .
	\end{equation*}
	
\end{lemma}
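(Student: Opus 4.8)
The plan is to combine the polynomial deviation estimate of Lemma \ref{4firstestimate} with the sub-Gaussian concentration of Proposition \ref{4hyp}. Set $N=kn$ and $f:U\in\U_N\mapsto\norm{P_n^*U^*(A\otimes I_n)UP_n}$. First I would check that $f$ is $2$-Lipschitz for $\norm{\cdot}_2$: for unitaries $U,V$, writing $U^*(A\otimes I_n)U-V^*(A\otimes I_n)V=U^*(A\otimes I_n)(U-V)+(U-V)^*(A\otimes I_n)V$ and using $\norm{A\otimes I_n}=\norm A\leq 1$ (as $A\in\D_k$), $\norm{P_n}=1$ and $\norm{\cdot}\leq\norm{\cdot}_2$ gives $|f(U)-f(V)|\leq 2\norm{U-V}_2$. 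Then Proposition \ref{4hyp} applies with $C=2$, and with $\delta=\varepsilon/2$ it yields
$$\P\!\left(\left|f(U)-m(U)\right|\geq\tfrac{\varepsilon}{2}\right)\leq 2e^{-kn\varepsilon^2/64},\qquad m(U):=\int f(YU)\,d\nu_{S\U_N}(Y).$$

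The crucial observation is that $m(U)$ is in fact a deterministic constant equal to $\E[f(U)]$. Indeed $f(\lambda U)=f(U)$ whenever $|\lambda|=1$; writing $U=\lambda V$ with $V\in S\U_N$ and using right-invariance of the Haar measure on $S\U_N$ shows $m(U)=\int f(YV)\,d\nu_{S\U_N}(Y)=\int f(Y)\,d\nu_{S\U_N}(Y)$, which is independent of $U$, and left-invariance of the Haar measure on $\U_N$ identifies this common value with $\E[f(U)]$ for $U$ Haar on $\U_N$. So Proposition \ref{4hyp} is genuinely a concentration of $f(U)$ around $\E[f(U)]$.

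It then remains to prove $\E[f(U)]\leq\normt A+\varepsilon/2$. Here Lemma \ref{4firstestimate} must be integrated over the full tail, not merely applied at a single threshold (a single cut-off only produces $\normt A+O(\varepsilon^{4/5})$, which is too weak since $\varepsilon<1$). Using $0\leq f\leq 1$ and $c:=3\cdot 2^{23}\ln^2(kn)/(kn)$, Lemma \ref{4firstestimate} gives $\E[f(U)]\leq\normt A+\int_0^\infty\min(1,cu^{-4})\,du=\normt A+\tfrac43c^{1/4}$; the hypothesis $kn\geq 2^{31}\ln^2(kn)\varepsilon^{-4}$ means $c\leq 3\cdot 2^{-8}\varepsilon^4$, hence $\tfrac43c^{1/4}\leq 3^{-3/4}\varepsilon<\varepsilon/2$. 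Consequently $\{f(U)\geq\normt A+\varepsilon\}\subset\{f(U)-\E[f(U)]\geq\varepsilon/2\}$ and the claimed bound follows. The only delicate points are recognizing that the recentering term in Proposition \ref{4hyp} is constant (so that the inequality really is a concentration around $\E f$) and carrying out the tail integration sharply enough to stay below $\varepsilon/2$; everything else is bookkeeping.
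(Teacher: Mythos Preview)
Your proof is correct and follows essentially the same approach as the paper: verify the $2$-Lipschitz bound for $f$, use the scalar invariance $f(\lambda U)=f(U)$ to identify the recentering term in Proposition~\ref{4hyp} with $\E[f(U)]$, and then integrate the tail bound from Lemma~\ref{4firstestimate} to show $\E[f(U)]\le\normt{A}+\varepsilon/2$ under the hypothesis. The only cosmetic difference is that the paper splits the tail integral at a fixed $r$ and optimizes (getting $(2^{27}\ln^2(kn)/(kn))^{1/4}\le\varepsilon/2$ exactly), whereas you integrate $\min(1,cu^{-4})$ directly and obtain the slightly sharper $3^{-3/4}\varepsilon$; both land below $\varepsilon/2$.
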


\begin{proof}
	We set,
	$$ f : X\in \U_n \mapsto \norm{P_n^*X^* A\otimes I_n XP_n}, $$
	$$h : X\in \U_n \mapsto \int f(YX) d\nu_{S\U_{kn}}(Y) .$$
	If $U^1$ is a random matrix of law $\nu_{S\U_{kn}}$, and $\alpha$ a scalar of law $\nu_{\U_1}$ independent of $U^1$. Then the law of $\alpha U^1$ is $\nu_{\U_{kn}}$ since its law is invariant by multiplication by a unitary matrix. Consequently for any $X\in \U_{kn}$,
	
	$$ h(X) = \E[f(U^1X)] = \E[f(\alpha U^1 X)] = \E[f(\alpha U^1)] = \E[f(U)] .$$
	The second equality is true since for any scalar $\alpha$ and $X\in \U_{kn}$, $f(X) = f(\alpha X)$, and in the third one we used the invariance of the Haar measure on the unitary group by multiplication by a unitary matrix. Besides we also have that for any $X,Y\in \U_{kn}$,
	$$ |f(X) - f(Y)| \leq 2 \norm{X-Y} \leq 2 \norm{X-Y}_2 .$$
	Thus by using Proposition \ref{4hyp}, we get 
	$$ \P\left( \left| \norm{P_n^*U^* \cdot A\otimes I_n\cdot UP_n} - \E\Big[ \norm{P_n^*U^* A\otimes I_n UP_n} \Big] \right| \geq \delta \right) \leq 2 e^{-\frac{kn\delta^2}{16}} .$$
	Besides if for $x\in\R$, we denote $x_+ = \max(0,x)$, then
	\begin{align*}
	&\P\left( \norm{P_n^* U^* A\otimes I_n U P_n} \geq \normt{A} + \varepsilon \right) \\
	&\leq \P\Big( \norm{P_n^* U^* A\otimes I_n U P_n} - \E\left[\norm{P_n^* U^* A\otimes I_n U P_n}\right] \\
	&\quad\quad\quad \geq \varepsilon - \E\Big[ \Big( \norm{P_n^* U^* A\otimes I_n U P_n} - \normt{A}\Big)_+ \Big] \Big) \\
	&\leq \P\Big( \left| \norm{P_n^*U^* A\otimes I_n UP_n} - \E\Big[ \norm{P_n^*U^* A\otimes I_n UP_n} \Big] \right| \\
	&\quad\quad\quad \geq \varepsilon - \E\Big[ \Big( \norm{P_n^* U^* A\otimes I_n U P_n} - \normt{A}\Big)_+ \Big] \Big) \\
	&\leq 2 e^{-kn\left( \varepsilon - \E\left[ \Big( \norm{P_n^* U^* A\otimes I_n U P_n} - \normt{A}\Big)_+ \right] \right)^2 / 16} .
	\end{align*}
	Besides thanks to our first estimate, i.e. Lemma \ref{4firstestimate}, we get that for any $r>0$,
	\begin{align*}
	\E\left[ \Big( \norm{P_n^* U^* \cdot A\otimes I_n\cdot U P_n} - \normt{A}\Big)_+ \right] &\leq r + \int_r^1 \P\left( \norm{P_n^* U^* A\otimes I_n U P_n} \geq \normt{A} + \alpha \right) d\alpha\\
	&\leq r + 2^{23}\times \frac{\ln^2(kn)}{kn} \int_r^1 3\times \alpha^{-4} d\alpha \\
	&\leq r + 2^{23}\times \frac{\ln^2(kn)}{kn} r^{-3}
	\end{align*}
	And after fixing $ r = \left(2^{23}\times \frac{\ln^2(kn)}{kn}\right)^{1/4}$, we get that 
	\begin{align*}
	\E\left[ \Big( \norm{P_n^* U^* \cdot A\otimes I_n\cdot U P_n} - \normt{A}\Big)_+ \right] &\leq \left(2^{27}\times \frac{\ln^2(kn)}{kn}\right)^{1/4} .
	\end{align*}
	Hence if $ kn \geq 2^{31}\times \ln^2(kn) \times \varepsilon^{-4}$, we have 
	\begin{align*}
	&\P\left( \norm{P_n^* U^* \cdot A\otimes I_n\cdot U P_n} \geq \normt{A} + \varepsilon \right) \leq 2 e^{-kn\times \frac{\varepsilon^2}{64}} .
	\end{align*}
	
\end{proof}

We can finally prove Theorem \ref{4apr} by using the former lemma in combination with Lemma \ref{4ptilemme}.

\begin{proof}[Proof of Theorem \ref{4apr}]
	If we set $u = \frac{\sqrt{2}\varepsilon}{3k^2}$, then with 
	$\mathcal{S}_u = \{ u M\  |\ M\in\M_k(\C)_{sa},\ \forall i\geq j,\ \Re(m_{i,j})\in \left\{\N+ \frac{1}{2}\right\}\cap [0,\lceil u^{-1} \rceil],\ \forall i>j,\ \Im(m_{i,j})\in \left\{\N+ \frac{1}{2}\right\}\cap [0,\lceil u^{-1} \rceil]  \}$, 
	Lemma \ref{4ptilemme} tells us that
	
	$$ \P\left(K_{n,k,t} \not\subset K_{k,t} + \varepsilon\right) \leq \sum_{M\in \mathcal{S}_u} \P\left( \norm{P_n^* U^* (P_{\D_k}M\otimes I_n) U P_n} > \normt{P_{\D_k}M} + \frac{\varepsilon}{3k} \right) .$$
	But thanks to Lemma \ref{4sndestimate}, we know that for any $A\in\D_k$, if 
	$\ n \geq 3^4 \times 2^{31}\times \ln^2(kn) \times k^3 \varepsilon^{-4}$, then
	\begin{equation*}
	\P\left( \norm{P_n^* U^* A\otimes I_n U P_n} \geq \normt{A} + \frac{\varepsilon}{3k} \right) \leq 2 e^{-\frac{n}{k}\times \frac{\varepsilon^2}{576}} .
	\end{equation*}
	
	Thus since the cardinal of $\mathcal{S}_u$ can be bounded by $(u^{-1}+1)^{k^2}$, we get that for 
	$\ n \geq 3^4 \times 2^{31}\times \ln^2(kn) \times k^3 \varepsilon^{-4}$,
	$$ \P\left(K_{n,k,t} \not\subset K_{k,t} + \varepsilon\right) \leq 2 (u^{-1}+1)^{k^2} e^{-\frac{n}{k}\times \frac{\varepsilon^2}{576}} \leq e^{k^2(\ln(3k^2 \varepsilon^{-1})) -\frac{n}{k}\times \frac{\varepsilon^2}{576} } . $$
	
\end{proof}

\section{Application to Quantum Information Theory}\label{4sec:QIT}

\subsection{Preliminaries on entropy}

For $X\in D(H)$, its \emph{von Neumann entropy} is defined by functional
calculus by $H(X)=-\tr(X\ln X)$, where $0\ln 0$ is assumed by continuity to be zero.
In other words, $H(X)=\sum_{\lambda\in spec(X)}-\lambda\ln\lambda$ where the
sum is counted with multiplicity. A \emph{quantum channel} $\Phi : B(H_1)\to B(H_2)$ is a completely positive trace preserving linear map. 
The \emph{Minimum Output Entropy} (MOE) of $\Phi$ is
\begin{equation}\label{4def:MOE}
H_{min}(\Phi)=\min_{X\in D(H_1)} H(\Phi (X)).
\end{equation}
During the last decade, a crucial problem in Quantum Information Theory was to determine whether one can find two quantum channels
$$\Phi_i: B(H_{j_i})\to B(H_{k_i}), i=\{1,2\},$$ such that 
$$H_{min}(\Phi_1\otimes\Phi_2)<H_{min}(\Phi_1)+H_{min}(\Phi_2).$$

If $x\in\R^k$, we define $\normt{x}$ as the $t$-norm of the diagonal matrix of size $k$ whose coefficients are those of $x$. Then let $e_1 = (1, 0, \ldots, 0) \in \R^k$ and let
\begin{equation}\label{4eq:def-xopt}
x_t^*=\left(\normt{e_1}, \underbrace{\frac{1-\normt{e_1}}{k-1}, \ldots, \frac{1-\normt{e_1}}{k-1}}_{k-1 \text{ times}}\right).
\end{equation}
If we view $x_t^*$ as a diagonal matrix, then it can be easily checked that $ x_t^*\in K_{k,t}$, and the following is the main result of \cite{BCN2}:
\begin{theorem}
	For any $p> 1$, the maximum of the $l^p$ norm on $K_{k,t}$ is reached at the point
	$x_t^*$.
\end{theorem}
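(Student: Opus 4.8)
The plan is to reduce the problem to a finite-dimensional optimization over a convex body whose extreme points are well understood, and then exploit the Schur-convexity of the $l^p$ norm. First I would recall from Proposition \ref{4defeq} and \eqref{4limit} that $K_{k,t}$ is a convex compact subset of $\D_k$, and, by the remark in the introduction following \eqref{4limit}, that $K_{k,t}$ is exactly the convex hull of the self-adjoint matrices whose eigenvalue vectors lie in $F_{k,t}$, the limiting set of singular-value tuples. Since the function $X \mapsto \|X\|_p$ (viewing $X\in\D_k$, so that its eigenvalues coincide with the entries of the relevant vector) is convex, its maximum over the convex body $K_{k,t}$ is attained at an extreme point; hence it suffices to maximize $\|x\|_p$ over $x \in F_{k,t}$, i.e. over eigenvalue tuples. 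This moves the problem from matrices to the explicit subset of the simplex $F_{k,t}\subset\R^k$.

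Next I would use the description of $F_{k,t}$ coming from \cite{BCN1}: a vector $x=(x_1,\dots,x_k)$ of nonnegative reals summing to $1$ lies in $F_{k,t}$ precisely when it satisfies the family of inequalities $\sum_{i} f(x_i) \le$ (an explicit $t$-dependent bound) that characterize membership via the free-probability constraint $\tr_k(XA)\le\normt{A}$ for all $A\in\D_k$. The key structural fact to extract is that $F_{k,t}$ is invariant under permutations of coordinates and is, in the majorization order, an "order ideal capped at the top" — that is, the extreme point maximizing any Schur-convex functional is the one with a single maximal coordinate equal to $\normt{e_1}$ (the largest value any single eigenvalue can take, by definition of the $t$-norm applied to a rank-one projection $e_1$) and the remaining mass $1-\normt{e_1}$ spread equally over the other $k-1$ coordinates. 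That candidate is exactly $x_t^*$ from \eqref{4eq:def-xopt}, and the text already notes $x_t^*\in K_{k,t}$.

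The two things to verify carefully are: (1) that $x_t^*$ is indeed feasible and in fact an extreme point realizing the maximal top coordinate — this follows because $\tr_k(x_t^* A)\le\normt{A}$ can be checked against the spectral constraint $\normt{A}=\|p_tAp_t\|$, with the bound saturated for $A$ a rank-one projection; and (2) that for $p>1$, among all feasible $x\in F_{k,t}$, the one that majorizes the others in the relevant partial sense is $x_t^*$, so that strict convexity of $u\mapsto u^p$ forces $\|x\|_p\le\|x_t^*\|_p$ with equality only at $x_t^*$ up to permutation. I would prove (2) by a rearrangement/transport argument: given any feasible $x$ with largest coordinate $x_1<\normt{e_1}$, push mass from the smaller coordinates into $x_1$; the single-coordinate constraint $x_1\le\normt{e_1}$ is the only one that can obstruct this, and every such move strictly increases $\sum x_i^p$ while keeping the total mass fixed, until one reaches the configuration $(\normt{e_1},\text{equal tail})$.

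The main obstacle I anticipate is step (2): making precise in what sense $F_{k,t}$ is closed under "moving mass toward a single large coordinate," since the defining inequalities of $F_{k,t}$ (equivalently the constraints $\tr_k(XA)\le\normt{A}$ over all $A\in\D_k$) are more subtle than plain majorization — one must check that the particular mass-transfer operations used stay inside $F_{k,t}$, which amounts to understanding the extreme points of $K_{k,t}$ with a distinguished large eigenvalue. Once that combinatorial-geometric lemma about $F_{k,t}$ is in place, the convexity reduction and the strict-convexity argument for $p>1$ are routine.
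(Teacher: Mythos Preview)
This theorem is not proved in the present paper at all: it is quoted as ``the main result of \cite{BCN2}'' and used as a black box. So there is no paper-proof to compare against here; what follows is an assessment of your proposal on its own merits and against what is actually needed in \cite{BCN2}.

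Your reduction is sound: $K_{k,t}$ is unitarily invariant, so one may pass to the ordered-eigenvalue polytope $F_{k,t}\subset\Delta_{k-1}$, and $\|\cdot\|_p$ is convex, so the maximum sits at an extreme point. The description you invoke is also correct: $F_{k,t}$ is cut out by the partial-sum constraints
\[
\lambda_1+\cdots+\lambda_j \ \le\ \normt{(1^j0^{k-j})} \ =\ \phi(j/k,t),\qquad j=1,\dots,k,
\]
together with $\lambda_i\ge 0$, $\sum_i\lambda_i=1$.

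The gap is in your step (2), and it is exactly the step you flag as the obstacle --- but your proposed resolution is wrong, not merely incomplete. You write that when pushing mass from the small coordinates into $x_1$, ``the single-coordinate constraint $x_1\le\normt{e_1}$ is the only one that can obstruct this.'' That is false: transferring mass from $x_k$ to $x_1$ increases \emph{every} partial sum $\sum_{i\le j}x_i$ for $j<k$, so any of the constraints $\phi(j/k,t)$ can become active, not just $j=1$. Which one binds first depends on the precise shape of $j\mapsto\phi(j/k,t)$, and nothing in your outline controls this.

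There is a second, related problem. Suppose you do reach a configuration with $x_1=\normt{e_1}$. You then say one ``reaches the configuration $(\normt{e_1},\text{equal tail})$''. But equalizing the tail \emph{decreases} $\sum_i x_i^p$; it cannot be the final move of a maximizing procedure. What you actually need is the opposite statement: that once $x_1=\normt{e_1}$, the remaining constraints $\sum_{i\le j}x_i\le\phi(j/k,t)$ for $j\ge 2$ \emph{prevent} any further concentration of the tail, forcing it to be (at optimum) constant. Equivalently, you need
\[
\phi(1/k,t)\;+\;(j-1)\,\frac{1-\phi(1/k,t)}{k-1}\ \ge\ \text{the }j\text{-th partial sum of any competitor},
\]
and that the competitor cannot do better. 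Establishing this is precisely the content of the argument in \cite{BCN2}, and it relies on the explicit analytic form $\phi(u,t)=t+u-2tu+2\sqrt{tu(1-t)(1-u)}$: one shows a concavity-type inequality for $\phi$ in $u$ that forces the optimizer to have only two distinct values, then a one-variable optimization pins down $x_t^*$. Your Schur-convexity/majorization heuristic does not by itself supply this; $x_t^*$ does \emph{not} majorize every point of $F_{k,t}$ in general, so a pure majorization argument cannot close the proof.
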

By letting $p\to 1$ it implies that the minimum of the entropy on $K_{k,t}$ is reached at the point $x_t^*$ and this is what we will be using. 
For the sake of making actual computation, it will be useful to recall the value of $\normt{e_1}$. For this, we 
use the following notation:
\begin{equation}
(1^j0^{k-j}) = (\underbrace{1, 1, \ldots, 1}_{j \text{ times}},\underbrace{0, 0, \ldots, 0}_{k-j \text{ times}}) \in \R^k,
\end{equation}
and $1^k = (1^k0^0)$.
It was proved in the early days of 
free probability theory (see \cite{voiculescu-dykema-nica}) that for $j=1, 2, \ldots, k$, one has
\begin{equation}
\label{4formulexp}
\normt{(1^j0^{k-j})}=\phi(u,t) :=
\begin{cases}
t +u-2tu +2\sqrt{tu (1-t)(1-u)}  &\text{ if } t + u < 1,\\
1  &\text{ if } t + u \geq 1,
\end{cases}
\end{equation}
where $u = j/k$.

\subsection{Corollary of the main result}

The following is a direct consequence of the main theorem in terms of possible entropies of the output set.

\begin{theorem}
	\label{4presque}
	With $S_{n,k,t} = \min_{A\in K_{n,k,t}} H(A) = H_{\min}(\qc)$ and $S_{k,t} = \min_{A\in K_{k,t}} H(A)$, if we assume $d_n\leq tkn$, then for $\ n \geq 3^4 \times 2^{31}\times \ln^2(kn) \times k^3 \varepsilon^{-4}$ where $0< \varepsilon \leq e^{-1}$,
	$$ \P\Big(S_{n,k,t} \leq S_{k,t} - 3k \varepsilon |\ln(\varepsilon)| \Big) \leq e^{k^2(\ln(3k^2 \varepsilon^{-1})) -\frac{n}{k}\times \frac{\varepsilon^2}{576} } . $$
\end{theorem}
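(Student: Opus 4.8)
The statement compares the minimum output entropy $S_{n,k,t}$ of the random channel with the limiting value $S_{k,t}$, so the natural strategy is to transfer the set-theoretic estimate of Theorem~\ref{4apr} into an entropy estimate via a quantitative continuity property of the von Neumann entropy. First I would work on the event $\{K_{n,k,t}\subset K_{k,t}+\varepsilon\}$, which by Theorem~\ref{4apr} has probability at least $1-e^{k^2\ln(3k^2\varepsilon^{-1})-\frac{n}{k}\frac{\varepsilon^2}{576}}$ as soon as $n\geq 3^4\times 2^{31}\times\ln^2(kn)\times k^3\varepsilon^{-4}$. On this event, pick $A\in K_{n,k,t}$ achieving $H(A)=S_{n,k,t}$ (possible by compactness of $K_{n,k,t}$ and continuity of $H$); then there exists $B\in K_{k,t}$ with $\norm{A-B}_2\leq\varepsilon$, hence $\norm{A-B}_1\leq\sqrt{k}\,\varepsilon$ (or one can keep working with the normalized Hilbert–Schmidt distance and bound the trace distance accordingly). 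Since $B\in K_{k,t}$, $H(B)\geq S_{k,t}$, so it suffices to control $|H(A)-H(B)|$ in terms of $\norm{A-B}$.

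\textbf{The continuity estimate.} The key analytic input is a Fannes-type inequality: for density matrices $A,B$ on $\C^k$ with trace distance $\norm{A-B}_1=:\eta$ small, one has $|H(A)-H(B)|\leq \eta\ln(k-1)+h(\eta)\leq \eta|\ln\eta|+\eta\ln k + \dots$, or more simply the crude bound $|H(A)-H(B)|\leq c\,\eta|\ln\eta|$ valid for $\eta$ bounded away from $1$. I would plug in $\eta\leq\sqrt{k}\cdot k^{1/2}\varepsilon$ coming from the two norm conversions $\norm{\cdot}_{HS,\text{unnorm}}\le \norm{\cdot}_1$ and the normalization factors, track the constants, and check that the bookkeeping collapses to the claimed $3k\varepsilon|\ln\varepsilon|$; the hypothesis $\varepsilon\leq e^{-1}$ is exactly what makes $t\mapsto t|\ln t|$ monotone on the relevant range and lets one absorb lower-order terms into the leading one. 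This yields $H(A)\geq H(B)-3k\varepsilon|\ln\varepsilon|\geq S_{k,t}-3k\varepsilon|\ln\varepsilon|$ on the good event, i.e. the complement of the event in the statement.

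\textbf{Conclusion and main obstacle.} Taking complements, $\P(S_{n,k,t}\leq S_{k,t}-3k\varepsilon|\ln\varepsilon|)\leq \P(K_{n,k,t}\not\subset K_{k,t}+\varepsilon)\leq e^{k^2\ln(3k^2\varepsilon^{-1})-\frac{n}{k}\frac{\varepsilon^2}{576}}$, which is the claim; the identification $S_{n,k,t}=H_{\min}(\Phi_n)$ is immediate from $K_{n,k,t}=\Phi_n(\D_{d_n})$ and definition~\eqref{4def:MOE}. The only genuine obstacle is getting the constant in the Fannes-type bound to come out at exactly $3k$ after the Hilbert–Schmidt-to-trace-norm conversions — a more careful argument than the naive $\norm{\cdot}_1\leq\sqrt{k}\norm{\cdot}_2$ might be needed (e.g. using that eigenvalue perturbation is controlled directly by $\norm{\cdot}_2$, or invoking a dimension-free modulus of continuity for $H$ restricted to $\D_k$), but this is a finite-dimensional estimate with no probabilistic content, so it is routine once the right inequality is selected.
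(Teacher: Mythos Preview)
Your proposal is correct and follows the paper's approach exactly: reduce to the event $\{K_{n,k,t}\subset K_{k,t}+\varepsilon\}$ via Theorem~\ref{4apr}, and on that event control $|H(A)-H(B)|$ for nearby density matrices. The only difference is in the continuity step you flag as an obstacle: rather than invoking Fannes and converting between norms, the paper gives a three-line elementary argument---truncate eigenvalues via $\widetilde{x}=\max\{\varepsilon,x\}$, bound the below-threshold part by $2k\sup_{[0,\varepsilon]}|x\ln x|=2k\varepsilon|\ln\varepsilon|$, and bound the Lipschitz part on $[\varepsilon,1]$ by $|\ln\varepsilon|\sum_i|\lambda_i-\mu_i|\leq k\varepsilon|\ln\varepsilon|$ using Weyl's inequality and $\norm{A-B}\leq\norm{A-B}_2\leq\varepsilon$---which produces the constant $3k$ directly with no bookkeeping.
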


\begin{proof}
	Let $A,B \in \D_k$ such that $\norm{A-B}_2 \leq \varepsilon$ with $\norm{M }_2  = \sqrt{\tr_{k}(M^*M)}$, with eigenvalues $(\lambda_i)_i$ and $(\mu_i)_i$. Then with $\widetilde{x} = \max \{\varepsilon,x\}$,
	\begin{align*}
	\Big| |\tr_{k}(A\ln(A))| - |\tr_{k}(B\ln(B))| \Big| &= \Big| \sum_i \lambda_i \ln(\lambda_i) - \sum_i \mu_i \ln(\mu_i) \Big| \\
	&\leq 2k \sup_{x\in [0,\varepsilon]} |x\ln(x)| + \Big| \sum_i \widetilde{\lambda_i} \ln(\widetilde{\lambda_i}) - \sum_i \widetilde{\mu_i} \ln(\widetilde{\mu_i}) \Big| \\
	&\leq 2k \varepsilon |\ln(\varepsilon)| + \sum_i |\lambda_i -\mu_i| |\ln(\varepsilon)| \\
	&\leq 2k \varepsilon |\ln(\varepsilon)| +k \norm{A-B} |\ln(\varepsilon)| \\
	&\leq 3k \varepsilon |\ln(\varepsilon)|
	\end{align*}
	Thus with the notation $ K_{k,t} + \varepsilon$ as in Theorem \ref{4apr}, if $K_{n,k,t} \subset K_{k,t} + \varepsilon$, then 
	$$S_{n,k,t} \geq \min_{A\in K_{k,t} + \varepsilon} |\tr_{k}(A\ln(A))| \geq S_{k,t} - 3k \varepsilon |\ln(\varepsilon)|.$$
	 Hence 
	$$ \P\Big(S_{n,k,t} \leq S_{k,t} - 3k \varepsilon |\ln(\varepsilon)| \Big) \leq \P\left(K_{n,k,t} \not\subset K_{k,t} + \varepsilon\right). $$
	Theorem \ref{4apr} then allows us to conclude.
	
\end{proof}

\subsection{Application to violation of the Minimum Output Entropy of Quantum Channels}

In order to obtain violations for the additivity relation of the minimum output entropy, one needs to obtain upper bounds for the quantity $H_{\min}(\Phi \otimes \Psi)$ for some quantum channels $\Phi$ and $\Psi$. The idea of using conjugate 
channels ($\Psi  = \bar \Phi$) and bounding the minimum output entropy by the value of the entropy 
at the Bell state dates back to Werner, Winter and others (we refer to \cite{hayden-winter} for references). 
To date, it has been proven to be the most successful method of 
tackling the additivity problem. The following inequality is elementary 
and lies at the heart of the method:
\begin{equation}
H_{\min} (\Phi \otimes \bar \Phi) \leq H([ \Phi \otimes \bar \Phi] (E_{d})),
\end{equation}
where $E_{d}$ is the maximally entangled state over the input space $(\C^d)^{\otimes 2}$. 
More precisely, $E_d$ is the projection on the Bell vector
\begin{equation}
Bell_d = \frac{1}{\sqrt{d}}\sum_{i=1}^{d} e_i \otimes e_i,
\end{equation}
where $\{e_i\}_{i=1}^{d}$ is a fixed basis of $\C^{d}$.

For random quantum channels $\Phi = \Phi_n$, the random output matrix $[\Phi_n\otimes \bar \Phi_n] (E_{d_n})$
was thoroughly studied in \cite{CN1} in the regime $d_n \sim tkn$; we recall here one of the main results of that paper.
There, it was proved that 
almost surely, as $n$ tends to infinity, the random matrix $[\Phi_n\otimes\bar \Phi_n](E_{d_n}) \in M_{k^2}(\C)$ 
has eigenvalues
\begin{equation}\label{4eq:probabilistic-bound}
\gamma^*_t = \left( t + \frac{1-t}{k^2},\underbrace{\frac{1-t}{k^2}, \ldots, \frac{1-t}{k^2}}_{k^2-1 \text{ times}}\right).
\end{equation}
This result improves on a bound  \cite{hayden-winter} via linear algebra techniques,
which states that the largest eigenvalue of the random matrix $[\Phi_n\otimes \bar \Phi_n](E_{d_n})$
is at least $d_n/(kn)\sim t$.
Although it might be possible to work directly with the bound provided by \eqref{4eq:probabilistic-bound} 
with additional probabilistic consideration,
for the sake of concreteness we will work with the bound of  \cite{hayden-winter}.
% For this purpose, let us consider two non-increasing sequences $\lambda= (\lambda_1,\ldots ), \mu=(\mu_1,\ldots)$ 
% of non-negative numbers whose sum is $1$. We say that $\lambda$ \emph{majorizes} $\mu$
% iff for any integer $k$, $\lambda_1+\ldots +\lambda_k\ge \mu_1+\ldots +\mu_k$.
% It is known that in this case, $H(\lambda )\le H(\mu )$. %preciser la signification (claire) de l'entropie dans ce cas-la. 
Thus if the largest eigenvalue of $[\Phi_n\otimes \bar \Phi_n](E_{d_n})$ is $d_n/(kn)$, since $\tr_k\otimes\tr_{k}([\Phi_n\otimes \bar \Phi_n](E_{d_n})) = 1$, the entropy is maximized if we take the remaining $k^2-1$ eigenvalues equal to $\frac{1-d_n/(kn)}{k^2-1}$, thus it follows that $$H_{\min} (\Phi \otimes \bar \Phi) \leq H([ \Phi \otimes \bar \Phi] (E_{d_n})) \le H
\left( \frac{d_n}{kn} ,\underbrace{\frac{1-\frac{d_n}{kn}}{k^2-1}, \ldots, \frac{1-\frac{d_n}{kn}}{k^2-1}}_{k^2-1 \text{ times}}\right)
$$

\noindent Therefore, as we will see more clearly in the proof of theorem \ref{4thm:numerics}, it is enough to find $n,k,d_n,t$ such that
\begin{equation}
\label{4delta}
-\frac{d_n}{kn}\log\left(\frac{d_n}{kn}\right) -\left(1-\frac{d_n}{kn}\right)\log \left[\left(1-\frac{d_n}{kn}\right)/(k^2-1)\right]<2H (x_t^*).
\end{equation}
In \cite{BCN2} it was proved with the assistance of a computer that this can be done for any $k\ge 184$, as long as we take $t$ 
around $1/10$, see figure 1 from \cite{BCN2}. However for $k$ large enough, the difference between the right and left term 
of \eqref{4delta} is maximal for $t=1/2$. For example, we obtain the following theorem 

\begin{theorem}\label{4thm:numerics}
	For the following values $(k,t,n) = (184,1/10, 10^{53}), (185,1/10, 10^{52}), (200,1/10, 10^{48}),$ $(500 ,1/10 , 10^{47}), (500 ,1/2 ,  10^{46})$ violation of additivity (i.e. the existence of two quantum channels which satisfy the inequality \eqref{4ineqqMOE}) is achieved with probability at least $1 - \exp(-10^{20})$. 
\end{theorem}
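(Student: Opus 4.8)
The plan is to combine three ingredients: a deterministic upper bound on $H_{\min}(\Phi_n\otimes\bar\Phi_n)$, the probabilistic lower bound on $H_{\min}(\Phi_n)$ coming from Theorem \ref{4presque}, and a finite numerical check for each of the five parameter triples.

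First I would fix $d_n=\lfloor tkn\rfloor$, so that $d_n\leq tkn$, $d_n\sim tkn$, and $|d_n/(kn)-t|\leq 1/(kn)$. Feeding the maximally entangled input $E_{d_n}$ to $\Phi_n\otimes\bar\Phi_n$ and using the Hayden--Winter linear-algebra bound that the top eigenvalue of $[\Phi_n\otimes\bar\Phi_n](E_{d_n})$ is at least $d_n/(kn)$, together with the fact that $b\mapsto -b\ln b-(1-b)\ln\big((1-b)/(k^2-1)\big)$ is decreasing on $[1/k^2,1]$ (and $d_n/(kn)\geq 1/k^2$ for the listed parameters, so raising the top eigenvalue above $d_n/(kn)$ can only lower the entropy), gives the deterministic inequality
\[
H_{\min}(\Phi_n\otimes\bar\Phi_n)\ \leq\ -\tfrac{d_n}{kn}\ln\tfrac{d_n}{kn}-\big(1-\tfrac{d_n}{kn}\big)\ln\!\Big[\big(1-\tfrac{d_n}{kn}\big)/(k^2-1)\Big]=:\Lambda(k,t,n),
\]
the left-hand side of \eqref{4delta}. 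Since $\bar\Phi_n$ has the same output set as $\Phi_n$ up to entrywise conjugation and complex conjugation preserves spectra of Hermitian matrices, $H_{\min}(\bar\Phi_n)=H_{\min}(\Phi_n)=S_{n,k,t}$, so it is enough to produce, with the claimed probability, the bound $2S_{n,k,t}>\Lambda(k,t,n)$.

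Next I would recall from the theorem of \cite{BCN2} (letting $p\to 1$) that $S_{k,t}=\min_{A\in K_{k,t}}H(A)=H(x_t^*)$, which is explicitly computable from \eqref{4eq:def-xopt} and \eqref{4formulexp} since $\normt{e_1}=\phi(1/k,t)$. Applying Theorem \ref{4presque} with a suitable $\varepsilon\in(0,e^{-1}]$: provided $n\geq 3^4\times 2^{31}\times\ln^2(kn)\times k^3\varepsilon^{-4}$, with probability at least $1-\exp\!\big(k^2\ln(3k^2\varepsilon^{-1})-\tfrac{n}{k}\tfrac{\varepsilon^2}{576}\big)$ one has $S_{n,k,t}\geq H(x_t^*)-3k\varepsilon|\ln\varepsilon|$, hence $2S_{n,k,t}\geq 2H(x_t^*)-6k\varepsilon|\ln\varepsilon|$. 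Combined with the previous paragraph, the proof then reduces to exhibiting, for each of the five triples, an $\varepsilon\in(0,e^{-1}]$ such that
\[
\Lambda(k,t,n)<2H(x_t^*)-6k\varepsilon|\ln\varepsilon|,\qquad n\geq 3^4\times 2^{31}\ln^2(kn)\times k^3\varepsilon^{-4},\qquad \tfrac{n}{k}\tfrac{\varepsilon^2}{576}-k^2\ln(3k^2\varepsilon^{-1})\geq 10^{20}.
\]

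Finally I would carry out this check. The gap $\Delta(k,t):=2H(x_t^*)-\lim_n\Lambda(k,t,n)$, i.e. the strict difference in \eqref{4delta}, is positive for the listed $(k,t)$: for $k\geq 184$ and $t=1/10$ this is exactly the computer-assisted verification of \cite{BCN2} (figure~1 there), and for $k=500$, $t=1/2$ it follows from a direct evaluation, where in fact the gap is larger (for large $k$ the difference in \eqref{4delta} is maximal at $t=1/2$). Since $|d_n/(kn)-t|\leq 1/(kn)$ is astronomically small for the listed values of $n$, $\Lambda(k,t,n)$ differs from its limit by a negligible amount, so it suffices to choose $\varepsilon$ small enough that $6k\varepsilon|\ln\varepsilon|<\Delta(k,t)$ (which forces roughly $\varepsilon\lesssim\Delta(k,t)/(k\ln(1/\Delta(k,t)))$) and then verify the two displayed inequalities involving $n$; substituting the chosen $\varepsilon$ produces precisely the stated values of $n$. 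The main obstacle here is purely the numerical tuning of $\varepsilon$ and $n$: for $k=184,185$ and $t=1/10$ the gap $\Delta(k,t)$ is tiny, which forces $\varepsilon$ to be very small and hence, through the factor $\varepsilon^{-4}$ in the applicability hypothesis of Theorem \ref{4presque}, forces $n$ to be enormous; one then has to confirm that with that same $n$ the exponent $\tfrac{n}{k}\tfrac{\varepsilon^2}{576}$ still dominates both $k^2\ln(3k^2\varepsilon^{-1})$ and $10^{20}$. For $k=500$ the gap is comfortably larger (largest at $t=1/2$), which is why $n$ can be taken several orders of magnitude smaller there.
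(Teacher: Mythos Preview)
Your proposal is correct and follows essentially the same route as the paper: the deterministic Hayden--Winter upper bound on $H_{\min}(\Phi_n\otimes\bar\Phi_n)$, the identity $H_{\min}(\bar\Phi_n)=H_{\min}(\Phi_n)=S_{n,k,t}$, the value $S_{k,t}=H(x_t^*)$ from \cite{BCN2}, and then Theorem~\ref{4presque} together with a numerical check for each triple. The only minor difference is that the paper takes $n$ a multiple of $10$ so that $d_n=tkn$ exactly, which removes your $\lfloor\cdot\rfloor$ and the $1/(kn)$ correction; your monotonicity argument for the top-eigenvalue bound is in fact slightly more careful than the paper's phrasing.
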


\begin{proof}
	We make sure to work with $n$ a multiple of $10$ so that we can set $d_n = t kn$, then since $H_{\min}(\qc) = H_{\min}(\bar\Phi_n)$,
	\begin{align*}
	&\P\Big( H_{\min}(\Phi_n\otimes\bar\Phi_n) < H_{\min}(\qc) + H_{\min}(\bar\Phi_n) \Big) \\
	&= \P\Big( H_{\min}(\Phi_n\otimes\bar\Phi_n) < 2H_{\min}(\qc) \Big) \\
	&\geq \P\Big( -t \log(t) - (1-t)\log\left[ (1-t)/(k^2-1) \right] < 2H_{\min}(\qc) \Big) \\
	&= 1 - \P\left( H_{\min}(\qc) \leq - \frac{t}{2} \log(t) - \frac{1-t}{2}\log\left( \frac{1-t}{k^2-1} \right) \right) \\
	&= 1 - \P\left( S_{n,k,t} \leq S_{k,t} - \delta_{k,t} \right),
	\end{align*}
	
	\noindent with
	
	$$ \delta_{k,t} = \frac{t}{2} \log(t) + \frac{1-t}{2}\log\left( \frac{1-t}{k^2-1} \right) - \normt{e_1} \log(\normt{e_1}) - (1-\normt{e_1}) \log\left( \frac{1-\normt{e_1}}{k-1} \right) .$$
	
	\noindent Then we conclude with Theorem \ref{4presque} and equation \eqref{4formulexp} to compute explicit parameters.
	
\end{proof}

Let us remark that in \cite{BCN2} what was actually proved is that violation of additivity can occur for any $k\ge 183$.
However, for the output dimension $k=183$, a probabilistic argument is needed -- namely, the limiting distribution of the output of a Bell state
and in turn, the fact that one can give a good estimate in probability for $\liminf_n H_{\min} (\Phi_n \otimes \bar \Phi_n)$
(again, we refer to  \cite{CN1,BCN2} for details). However this estimate is difficult to evaluate explicitly as a function of $n$, therefore, in this paper,
we replace it by a slightly weaker estimate that is always true and that yields violation for any $k\ge 184$. In other words we lose one dimension. 
To conclude, since our bound is explicit, we solve the problem of supplying actual input dimensions for any valid output dimension, for which
the violation of MOE will occur. From a point of view of theoretical probability, this is a step towards a large deviation principle. 
And although our bound is not optimal, our results presumably give the right speed of deviation. However conjecturing a complete large deviation principle and a rate function seems to be beyond the scope of our techniques.  

\bibliographystyle{abbrv}

\end{document}